
\documentclass[10pt]{amsart}
\usepackage{a4}
\usepackage{amssymb}
\usepackage{array}
\usepackage{booktabs}
\usepackage{multirow}
\usepackage{hhline}
\usepackage{color}
\usepackage{nccmath}

\pagestyle{plain}

\vfuzz2pt 
\hfuzz2pt 



\newtheorem{theorem}{Theorem}

\newtheorem{corollary}[theorem]{Corollary}
\newtheorem{lemma}[theorem]{Lemma}
\newtheorem{proposition}[theorem]{Proposition}
\newtheorem{definition}[theorem]{Definition}
\newtheorem{example}{Example}

\newtheorem{remark}[theorem]{Remark}

\numberwithin{theorem}{section}
\numberwithin{equation}{section}

\newcommand{\la}{\langle}
\newcommand{\ra}{\rangle}

\newcommand{\Comp}{\mathbb{C}}

\newcommand{\g}{\mathbb{G}}

\newcommand{\n}{\mathbb{N}}

\newcommand{\tor}{\mathbb{T}}

\newcommand{\z}{\mathbb{Z}}

\begin{document}
\title{Entropic uncertainty relations under localizations on discrete quantum groups}

\author{Sang-Gyun Youn}

\address{Sang-Gyun Youn : Department of Mathematical Sciences, Seoul National University,
San56-1 Shinrim-dong Kwanak-gu, Seoul 151-747, Republic of Korea}
\email{yun87654@snu.ac.kr}

\keywords{Entropic uncertainty principle, localization, free orthogonal quantum group and $\Lambda(p)$-set}
\thanks{2010 \it{Mathematics Subject Classification}:
\rm{Primary 46L89, 81R15, Secondary 43A15, 43A46, 81R50}. This work is supported by the Basic Science Research Program through the National Research Foundation of Korea (NRF) Grant NRF-2017R1E1A1A03070510 and the National Research Foundation of Korea (NRF) Grant funded by the Korean Government (MSIT) (No.2017R1A5A1015626).}
\begin{abstract}
The uncertainty principle has been established within the framework of locally compact quantum groups in recent years. This paper demonstrates that entropic uncertainty relations can be strengthened under localizations on discrete quantum groups, which is the case if the dual compact quantum group $\g$ is the free orthogonal quantum group $O_N^+$ with $N\geq 3$ or if $\g$ admits an infinite $\Lambda(p)$ set with $p>2$. On the other hand, this paper explains the reason why such phenomena do not appear when $\g$ is one of the connected semisimple compact Lie groups, $O_2^+$ and the quantum $SU(2)$ groups. Also, we discuss the divergence of entropic uncertainty relations together with some explicit explanations.
\end{abstract}

\maketitle

\section{Introduction}

The uncertainty principle has been extensively studied for a long time, especially by Hardy\cite{Ha33}, Hirschman\cite{Hi57}, Beckner\cite{Be75}, Donoho and Stark\cite{DoSt89}. Also, Smith\cite{Sm90}, Tao\cite{Ta05}, {\"O}zaydin and Przebinda\cite{OzPr04} explored the principle in the category of locally compact abelian groups. More recently, Alagic and Russell\cite{AlRu08} established the Donoho-Stark uncertainty principle for a general compact group, and through a series of studies \cite{CrKa14}, \cite{LiWu17} and \cite{JiLiWu17}, Donoho-Stark uncertainty principle, Hirschman-Beckner uncertianty principle and Hardy's uncertainty principle have been studied in the framework of locally compact quantum groups.

Due to \cite{CrKa14}, for $\widehat{\g}$ a discrete Kac algebra, we have
\begin{equation} \label{ineq0}
e^{H(\left |f\right |^2,\varphi_{\g})+H(\left |\widehat{f}\right |^2,\widehat{\varphi}_{\widehat{\g}})}\geq 1
\end{equation}
for all $\widehat{f}\in \ell^1(\widehat{\g})$ with $\left \|\widehat{f}\right\|_{\ell^2(\widehat{\g})}=1$, where $\g$ is the dual compact quantum group of $\widehat{\g}$ and $H(\left |f\right |^2,\varphi_{\g})$ (resp. $H(\left |\widehat{f}\right |^2,\widehat{\varphi}_{\widehat{\g}})$) denotes the relative entropy of $\left |f\right |^2$ (resp. $\left |\widehat{f}\right |^2$) with respect to the Haar state $\varphi_{\g}$ (resp. the left Haar weight $\widehat{\varphi}_{\widehat{\g}}$).

To quote Folland \cite{Fo97}, the uncertainty principle implies that {\it both a non-zero function and its Fourier transform cannot be sharply localized.}  In other words, if a function $f$ is concentrated, then its Fourier transform $\widehat{f}$ should be dispersed.

An interesting fact is that the inequality (\ref{ineq0}) can be improved by utilizing the studies of local Hausdorff-Young inequalities (see \cite{An93}, \cite{Sj95}, \cite{Ka00} and \cite{GaMaPa03}) if $f\in L^2(\tor)$ is highly concentrated and $\widehat{f}\in \ell^1(\z)$. More precisely, we have
\begin{equation}\label{ineq0.1}
\lim_{\epsilon\rightarrow 0}\inf_{\substack{\widehat{f}\in \ell^1(\z):\left \|\widehat{f}\right\|_2=1,\\ supp(f)\subseteq U_{\epsilon}}} e^{H(\left |f\right |^2,\varphi_{\tor})+H(\left |\widehat{f}\right |^2,\widehat{\varphi}_{\z}) } \geq \frac{e}{2}
\end{equation}
where $U_{\epsilon}=\left \{ e^{2\pi i \theta}\in \tor: \left |\theta \right |\leq \epsilon \right\}$. Note that localizations of $f$ on compact group $\tor$ is affected here and see Corollary \ref{cor0.1} for the proof of the inequality (\ref{ineq0.1}). Also, various studies for {\it local uncertainty inequalities} (\cite{Fa78}, \cite{Pr83}, \cite{Pr87}, \cite{PrRa85}, \cite{PrSi88a} and \cite{PrSi88b}) have already noted the localization of $f$ or $\widehat{f}$  in the framework of locally compact groups.

The main view of this paper is that similar, but even much stronger phenomena appear when $\widehat{f}$ is localized on certain discrete quantum groups in the sense that $\mathrm{supp}(\widehat{f})=\left \{\alpha\in \mathrm{Irr}(\g): \widehat{f}(\alpha)\neq 0\right\}$ is contained in a fixed subset $E\subseteq \mathrm{Irr}(\g)$.

The following theorem is a summary of our main results and we will focus only on the spaces $\mathrm{Pol}(\g)=\left \{f\in L^2(\g): \widehat{f}(\alpha)=0~\mathrm{for~all~but~finitely~many~}\alpha \right\}$ and $\mathrm{Pol}_E(\g)=\left \{f\in \mathrm{Pol}(\g):\mathrm{supp}(\widehat{f})\subseteq E \right\}$ in order to simplify some technical issues. Also, the following results for entropic uncertainty relations enhance the Donoho-Stark uncertainty relation due to Proposition \ref{prop1}.

\begin{theorem}\label{thm0}
\begin{enumerate}
\item For the free orthogonal quantum groups $O_N^+$, we have
\begin{equation}
\inf_{\substack{f\in \mathrm{Pol}(\g):\left \|f\right \|_{L^2(\g)}=1,\\ supp(\widehat{f})\subseteq \mathrm{Irr}(O_N^+)\setminus \left \{0 \right \}}} e^{ H(\left |f\right |^2,\varphi_{O_N^+})+H(\left |\widehat{f}\right |^2,\widehat{\varphi}_{\widehat{O_N^+}})} \geq \frac{N}{8}~for~all~N> 8.
\end{equation}
\item For the free orthogonal quantum group $O_N^+$ with $N\geq 3$, we have
\begin{equation}\label{ineq0.2}
\inf_{\substack{f\in \mathrm{Pol}(\g):\left \|f\right \|_{L^2(\g)}=1,\\ supp(\widehat{f})\subseteq \mathrm{Irr}(O_N^+)\setminus \left \{0,1,\cdots, t-1 \right \}}}  e^{H(\left |f\right |^2,\varphi_{O_N^+})+H(\left |\widehat{f}\right |^2,\widehat{\varphi}_{\widehat{O_N^+}})} \gtrsim (\frac{N}{2})^t~for~all~t\in \n.
\end{equation}
\item If a compact Kac algebra $\g$ admits an infinite $\Lambda(p)$-set $E\subseteq \mathrm{Irr}(\g)$ with $p>2$, then
\begin{equation}\label{ineq0.4}
e^{H(\left |f\right |^2,\varphi_{\g})+H(\left |\widehat{f}\right |^2,\widehat{\varphi}_{\widehat{\g}})}\gtrsim \frac{1}{\left \|\widehat{f}\right \|_{\ell^{\infty}(\widehat{\g})}}
\end{equation}
for all $f\in \mathrm{Pol}_E(\g)$ with $\left \|f\right \|_{L^2(\g)}=1$.
\end{enumerate}
\end{theorem}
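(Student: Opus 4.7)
The plan is to prove part (3) first as the general case, and then derive parts (1) and (2) by inserting quantitative $\Lambda(p)$-constants for tails of $\mathrm{Irr}(O_N^+)$ into the framework of part (3).

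For part (3), I would decouple the two entropies via separate spectral inequalities. Since $|\widehat{f}|^2$ is a positive element with $\widehat{\varphi}_{\widehat{\g}}(|\widehat{f}|^2) = \|\widehat{f}\|_{\ell^2(\widehat{\g})}^2 = 1$ and its spectrum is contained in $[0, \|\widehat{f}\|_{\ell^\infty(\widehat{\g})}^2]$, the pointwise bound $-t\log t \geq -t\log \|\widehat{f}\|_{\ell^\infty(\widehat{\g})}^2$ on that interval, applied by functional calculus and integrated against $\widehat{\varphi}_{\widehat{\g}}$, immediately gives
\begin{equation*}
H(|\widehat{f}|^2, \widehat{\varphi}_{\widehat{\g}}) \geq -2\log\|\widehat{f}\|_{\ell^\infty(\widehat{\g})}.
\end{equation*}
For the $L^2(\g)$-side, I would use the scalar inequality $-t\log t \geq \tfrac{2}{p-2}(t - t^{p/2})$ for $t\geq 0$ and $p>2$, which follows from $\log t \leq \tfrac{2(t^{(p-2)/2}-1)}{p-2}$. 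Applying it spectrally to $|f|^2$ and taking $\varphi_{\g}$ yields
\begin{equation*}
H(|f|^2, \varphi_{\g}) \geq \tfrac{2}{p-2}\bigl(1 - \|f\|_{L^p(\g)}^p\bigr) \geq \tfrac{2(1 - C^p)}{p-2},
\end{equation*}
where $C$ is the $\Lambda(p)$-constant of the set $E$. Adding the two bounds, exponentiating, and noting that $\|\widehat{f}\|_{\ell^\infty(\widehat{\g})} \leq \|\widehat{f}\|_{\ell^2(\widehat{\g})} = 1$ produces $e^{H+H} \gtrsim \|\widehat{f}\|_{\ell^\infty(\widehat{\g})}^{-1}$.

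For parts (1) and (2), the task reduces to showing that the sets $E_t := \mathrm{Irr}(O_N^+) \setminus \{0, 1, \ldots, t-1\}$ are $\Lambda(p)$-sets whose constants decay quantitatively in $t$ and $N$. Here I would exploit the representation theory of $O_N^+$: the dimensions $d_n$ satisfy the Chebyshev-type recursion $d_{n+1} = N d_n - d_{n-1}$ and grow like $q^{-n}$ with $q + q^{-1} = N$, and the fusion rules are those of $SU(2)$. Combined with noncommutative Haagerup- or Khintchine-type inequalities on free orthogonal quantum groups (in the spirit of Vergnioux's or Freslon's work), one expects estimates of the form $\|f\|_{L^p(\g)} \lesssim (2/N)^{t/p}\|f\|_{L^2(\g)}$ for $f \in \mathrm{Pol}_{E_t}(\g)$. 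Feeding such $\Lambda(p)$-constants into the bound from part (3) and optimizing over the exponent $p$ should recover the claimed $(N/2)^t$ factor in (2) and the explicit $N/8$ in (1), the latter being essentially the case $t=1$ with $N$ large enough.

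The main obstacle I anticipate is extracting the sharp $N$- and $t$-dependence in the $\Lambda(p)$-constants for $E_t$: the reduction carried out in part (3) is essentially soft once the two spectral inequalities are in place, whereas parts (1) and (2) require careful tracking of the constants coming out of the Haagerup-type inequality and the fine structure of $\mathrm{Irr}(O_N^+)$.
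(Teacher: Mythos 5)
Your part (3) is correct and takes a genuinely different route from the paper's. The paper proves it (Theorem \ref{thm3}) by interpolating the $\Lambda(p)$-bound down to exponents $2<q<p$, bounding $\|\widehat{f}\|_{\ell^{q'}(\widehat{\g})}$ from below through $\|\widehat{f}\|_{\ell^{\infty}(\widehat{\g})}$ (Proposition \ref{prop2}\,(2)), and then passing to the R\'enyi-entropy limit $q\to 2$; you instead decouple the two entropies and apply the scalar inequalities $-t\log t\geq -t\log M$ and $-t\log t\geq \tfrac{2}{p-2}(t-t^{p/2})$ spectrally. Both arguments in fact give the stronger bound $\gtrsim \|\widehat{f}\|_{\infty}^{-2}$, and your version is more elementary in that it never leaves the entropy picture; this part is fine.

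Parts (1) and (2), however, do not go through as proposed. First, the estimate you hope for, $\|f\|_{L^p(\g)}\lesssim (2/N)^{t/p}\|f\|_{L^2(\g)}$, is impossible for any nonzero $f$: the Haar state is a state, so $\|f\|_p\geq\|f\|_2$ for all $p\geq 2$, and no constant smaller than $1$ can occur. Second, and more fundamentally, the tails $E_t=\mathrm{Irr}(O_N^+)\setminus\{0,1,\dots,t-1\}$ are not $\Lambda(p)$-sets for any $p>2$ when $N\geq 3$: each block $\mathrm{Pol}_{\alpha}(\g)$ is finite dimensional and the blocks are $L^2$-orthogonal, so the union of a $\Lambda(p)$-set with a finite set is again $\Lambda(p)$; hence $E_t$ being $\Lambda(p)$ would make all of $\mathrm{Irr}(O_N^+)$ a $\Lambda(p)$-set, i.e. $\|f\|_p\leq K\|f\|_2$ on $\mathrm{Pol}(O_N^+)$, which fails because $L^{\infty}(O_N^+)$ is diffuse (projections $e$ of small trace give $\|e\|_p/\|e\|_2=\varphi(e)^{\frac1p-\frac12}\to\infty$). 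So there are no $\Lambda(p)$-constants to feed into your part (3); nor can the decoupling be rescued by a uniform lower bound on $H(|f|^2,\varphi_{\g})$ alone, since already mean-zero polynomials concentrated near a projection of small trace make that entropy arbitrarily negative. What actually drives (1) and (2) in the paper (Lemma \ref{lem4}, Theorem \ref{thm2}, Corollary \ref{cor1}) is the rapid decay property used on the other side of the Fourier transform: $\|\widehat{f}(k)\|_{HS}\leq C(1+k)n_k^{-1/2}\|f\|_{L^1(O_N^+)}$, so the Fourier transform restricted to $\mathrm{Pol}_{E_t}(\g)$ has $L^1\to\ell^{\infty}$ norm at most $\sup_{k\geq t}C(1+k)/\sqrt{n_k}$; interpolating this with Plancherel and letting the exponent tend to $2$ gives $e^{H(|f|^2,\varphi)+H(|\widehat{f}|^2,\widehat{\varphi})}\geq\inf_{k\geq t}n_k/(C^2(1+k)^2)$, and then $n_k\sim r_0^k$ with $r_0=\tfrac{N+\sqrt{N^2-4}}{2}$ together with the uniform control of $C$ yields $N/8$ and $(N/2)^t$. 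In short, the Vergnioux/Brannan-type input you name is indeed the right ingredient, but it must enter as a localized $L^1$--$\ell^{\infty}$ Hausdorff--Young improvement coupling $f$ with $\widehat{f}$, not as a $\Lambda(p)$-constant for $E_t$.
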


Note that Theorem \ref{thm0} (1) implies that, even in the slightest support restriction $E=\mathrm{Irr}(O_N^+)\setminus \left \{0\right\}$, the relation (\ref{ineq0}) is strengthened for all $N>8$. Moreover, by Theorem \ref{thm0} (2), we can show
\begin{equation}\label{ineq0.3}
\lim_{t\rightarrow \infty}\inf_{\substack{f\in \mathrm{Pol}(\g):\left \|f\right \|_{L^2(\g)}=1,\\ supp(\widehat{f})\subseteq \mathrm{Irr}(O_N^+)\setminus \left \{0,1,\cdots, t-1 \right \}}}  e^{H(\left |f\right |^2,\varphi_{O_N^+})+H(\left |\widehat{f}\right |^2,\widehat{\varphi}_{\widehat{O_N^+}})} =\infty,
\end{equation}
which can be considered a stronger counterpart of the inequality (\ref{ineq0.1}). However, a phenomenon such as (\ref{ineq0.3}) does not happen
\begin{itemize}
\item if $\widehat{\g}$ is a discrete group,
\item if $\g$ is a connected semisimple compact Lie group,
\item if $\g=O_2^+$ or if $\g=SU_q(2)$ with $0<q<1$.
\end{itemize}
More precisely, in Section \ref{sec:neg}, we will show that 
\begin{equation}\label{ineq0.5}
\sup_{\emptyset \neq E\subseteq \mathrm{Irr}(\g)}  \inf_{f\in \mathrm{Pol}_E(\g):\left \|f\right \|_{L^2(\g)}=1}  e^{H(\left |f\right |^2,\varphi_{\g})+H(\left |\widehat{f}\right |^2,\widehat{\varphi}_{\widehat{\g}})} <\infty
\end{equation} 
for the cases listed above.

On the other hand, Theorem \ref{thm0}, (3) provides a link between the study of lacunarities and uncertainty relations. For example, since the set of generators $\left \{g_j\right\}_{j=1}^{\infty}$ is a Leinert set in $\mathbb{F}_{\infty}$, we are able to obtain the following estimates
\begin{equation}
e^{H(\left | \widehat{f_n}\right |^2, \widehat{\mathbb{F}_{\infty}})+H(\left | f_n \right |^2, \mathbb{F}_{\infty})}\sim n
\end{equation}
for all $f_n=\displaystyle \frac{1}{\sqrt{n}}\sum_{j=1}^n \delta_{g_j}\in \ell^2(\mathbb{F}_{\infty})$ (Refer to Example \ref{ex2}).

In the last section, the divergence of entropic uncertainty relations will be discussed independently. We will show that the uncertainty relations diverge in general. Moreover, the divergence is detectable by establishing the estimates
\begin{equation}
e^{H(\left |\chi_{\alpha} \right |^2,\varphi_{\g})+H(\left |\widehat{\chi_{\alpha}}\right |^2,\widehat{\varphi}_{\widehat{\g}})}\sim n_{\alpha}^2
\end{equation}
when $\g$ is a compact semisimple compact Lie group or $O_N^+$ (Theorem \ref{thm4}). In the cases of quantum $SU(2)$ groups, the divergence does not appear at characters (Proposition \ref{prop3}) whereas appears at certain linear combinations of characters, thanks to the existence of an infinite central $\Lambda(4)$-set (Corollary \ref{thm5}).

\section{Preliminaries}

\subsection{Discrete quantum groups and dual compact quantum groups}

For any discrete quantum group $\widehat{\g}$ there exists a unique compact quantum group $\widehat{\g}=(L^{\infty}(\g),\Delta_{\g},\varphi_{\g})$ such that $L^{\infty}(\g)$ is a von Neumann algebra, $\Delta_{\g}:L^{\infty}(\g)\rightarrow L^{\infty}(\g)\overline{\otimes}L^{\infty}(\g)$ is a normal $*$-homomorphism satisfying
\begin{align*}
(\Delta\otimes \mathrm{id})\circ \Delta= (\mathrm{id}\otimes \Delta)\circ \Delta ,
\end{align*}
and $\varphi_{\g}$ is the unique normal faithful state on $L^{\infty}(\g)$ satisfying
\[(\mathrm{id}\otimes \varphi_{\g})(\Delta_{\g}(a))=\varphi_{\g}(a)1 =(\varphi_{\g} \otimes \mathrm{id})(\Delta_{\g}(a))~\mathrm{for~all~}a\in L^{\infty}(\g).\]
The state $\varphi_{\g}$ is called the {\it Haar state}.

We say that $u=(u_{i,j})_{1\leq i,j\leq n} \in M_n(L^{\infty}(\g))\cong L^{\infty}(\g)\otimes M_n$ is a finite dimensional unitary representation of $\g$ if 
\[u^*u=1\otimes \mathrm{Id}_n =uu^*~\mathrm{and}~\Delta(u_{i,j})=\sum_{k=1}^n u_{i,k}\otimes u_{k,j}~\mathrm{for~all~}1\leq i,j\leq n.\]
Furthermore, if $\left \{T\in M_n: (1\otimes T)u=u(1\otimes T)\right\}=\Comp \cdot \mathrm{Id}_n$, we say that the given unitary representation $u\in M_n(L^{\infty}(\g))$ is irreducible. 

A maximal family of mutually inequivalent (finite dimensional) unitary irreducible representations of $\g$ is denoted by 
\[\mathrm{Irr}(\g) \cong \left \{u^{\alpha}=(u^{\alpha}_{i,j})_{1\leq i,j\leq n_{\alpha}}\in M_{n_{\alpha}}(L^{\infty}(\g))\right\}_{\alpha\in \mathrm{Irr}(\g)}\] and the space of polynomials is defined by
\[\mathrm{Pol}(\g)=\bigoplus_{\alpha \in \mathrm{Irr}(\g)}\mathrm{Pol}_{\alpha}(\g)=\bigoplus_{\alpha \in \mathrm{Irr}(\g)} \mathrm{span}\left \{u^{\alpha}_{i,j}:~ 1\leq i,j\leq n_{\alpha}\right\}\subseteq L^{\infty}(\g).\]

Schur's orthogonality relation says that, for each $\alpha\in \mathrm{Irr}(\g)$, there exists a unique positive invertible matrix $Q_{\alpha}$ (which can be assumed to be diagonal \cite{Da10}) such that
\begin{equation}\label{eq3}
\varphi((u^{\beta}_{s,t})^*u^{\alpha}_{i,j})=\frac{\delta_{\alpha,\beta}\delta_{i,s}\delta_{j,t}(Q_{\alpha})_{i,i}^{-1}}{\mathrm{tr}(Q_{\alpha})}\mathrm{~and~}\varphi(u^{\beta}_{s,t}(u^{\alpha}_{i,j})^*)=\frac{\delta_{\alpha,\beta}\delta_{i,s}\delta_{j,t}(Q_{\alpha})_{j,j}}{\mathrm{tr}(Q_{\alpha})}
\end{equation}
for all $\alpha,\beta\in\mathrm{Irr}(\g)$, $1\leq i,j\leq n_{\alpha}$ and $1\leq s,t\leq n_{\beta}$. We say that $\g$ is of Kac type if the {Haar} state is tracial or equivalently $Q_{\alpha}=\mathrm{Id}_{n_{\alpha}}$ for all $\alpha\in \mathrm{Irr}(\g)$. Also, we define the associated character by $\displaystyle \chi_{\alpha}=\mathrm{tr}(u^{\alpha})=\sum_{j=1}^{n_{\alpha}}u^{\alpha}_{j,j}$ and the quantum dimension by $d_{\alpha}=\mathrm{tr}(Q_{\alpha})$ for each $\alpha\in \mathrm{Irr}(\g)$.

On the dual side, the underlying von Neumann algebra of $\widehat{\g}$ is defined as
\[\ell^{\infty}(\widehat{\g})=\left \{(X(\alpha))_{\alpha\in \mathrm{Irr}(\g)}\in \prod_{\alpha\in \mathrm{Irr}(\g)}M_{n_{\alpha}} : \sup_{\alpha\in \mathrm{Irr}(\g)}\left \|X(\alpha)\right\|<\infty \right\}\]
and the left Haar weight on $\ell^{\infty}(\widehat{\g})$ is described by
\begin{equation}
\widehat{\varphi}_{\widehat{\g}}(X)=\sum_{\alpha\in \mathrm{Irr}(\g)}d_{\alpha} \mathrm{tr}(X(\alpha)Q_{\alpha})
\end{equation}
for all $X=(X(\alpha))_{\alpha}\in \ell^{\infty}(\widehat{\g})_+$.

\subsection{Non-commutative $L^p$-spaces}

For $\g$ a compact quantum group, the von Neumann algebra $L^{\infty}(\g)$ is contractively embedded into $L^1(\g)=L^{\infty}(\g)_*$ under the identification
\[f\mapsto \varphi(\cdot f).\]

We define the non-commutative $L^p$-spaces $L^p(\g)$ to be $(L^{\infty}(\g),L^1(\g))_{\frac{1}{p}}$ where $(\cdot,\cdot)_{\theta}$ denotes the complex interpolation of a compatible pair of Banach spaces. Then the following contractive embeddings 
\[\mathrm{Pol}(\g)\subseteq L^{\infty}(\g)\hookrightarrow L^q(\g) \hookrightarrow L^p(\g)\]
hold for all $1\leq p<q\leq \infty$ and $\mathrm{Pol}(\g)$ is dense in $L^p(\g)$ for all $1\leq p<\infty$. If $\g$ is of Kac type, the non-commutative $L^p$-norm of $a\in L^{\infty}(\g)$ is explicitly given as
\begin{equation}\label{eq2}
\left \|a\right\|_{L^p(\g)}=\varphi(\left |a\right |^p)^{\frac{1}{p}}.
\end{equation}

On the dual side, for $\widehat{\g}$ the discrete dual quantum group, the non-commutative $\ell^p$-spaces are explicitly given by
\begin{equation}\label{eq4}
\ell^{p}(\widehat{\g})=\left \{(A(\alpha))_{\alpha\in \mathrm{Irr}(\g)}\in \ell^{\infty}(\widehat{\g}):\sum_{\alpha\in \mathrm{Irr}(\g)} d_{\alpha} \left \|A(\alpha)Q_{\alpha}^{\frac{1}{p}}\right\|_{S^p_{n_{\alpha}}}^p <\infty \right\}
\end{equation}
for all $1\leq p<\infty$, where $\left \|X\right\|_{S^p_{n_{\alpha}}}$ denotes the Schatten $p$-norm $\mathrm{tr}(\left |X\right |^p)^{\frac{1}{p}}$. Also, the natural $\ell^p$-norm for $A=(A(\alpha))_{\alpha\in \mathrm{Irr}(\g)}\in \ell^p(\widehat{\g})$ is 
\begin{equation}\label{eq5}
\left \|A\right\|_{\ell^p(\widehat{\g})}=(\sum_{\alpha\in \mathrm{Irr}(\g)}d_{\alpha} \left \|A(\alpha)Q_{\alpha}^{\frac{1}{p}}\right\|_{S^p_{n_{\alpha}}}^p)^{\frac{1}{p}}.
\end{equation}

The duality between $\ell^1(\widehat{\g})$ and $\ell^{\infty}(\widehat{\g})$ is given by
\begin{equation}
\la X,A \ra_{\ell^{\infty}(\widehat{\g}),\ell^1(\widehat{\g})}=\widehat{\varphi}_{\widehat{\g}}(XA)=\sum_{\alpha\in \mathrm{Irr}(\g)}d_{\alpha}\mathrm{tr}(X(\alpha)A(\alpha)Q_{\alpha})
\end{equation}
for all $X=(X(\alpha))_{\alpha\in \mathrm{Irr}(\g)}\in \ell^{\infty}(\widehat{\g})$ and $A=(A(\alpha))_{\alpha\in \mathrm{Irr}(\g)}\in \ell^1(\widehat{\g})$. Also, we have $(\ell^{\infty}(\widehat{\g}),\ell^{1}(\widehat{\g}))_{\frac{1}{p}}=\ell^p(\widehat{\g})$ for all $1<p<\infty$. When we deal with the $L^p$-norms, the following complex interpolation theorem will be frequently used.

\begin{theorem}(The interpolation property [Theorem 4.1.2, \cite{BeLo76}])\label{lem3}

Given compatible pairs of Banach spaces $(X_0,X_1)$ and $(Y_0,Y_1)$, if a linear map $T:X_0+X_1\rightarrow Y_0+Y_1$ satisfies $\left \|T\right\|_{X_0\rightarrow Y_0},\left \|T\right\|_{X_1\rightarrow Y_1}<\infty$, then $T:(X_0,X_1)_{\theta}\rightarrow (Y_{0},Y_1)_{\theta}$ is bounded with
\begin{equation}
\left \|T\right\|_{(X_0,X_1)_{\theta}\rightarrow (Y_{0},Y_1)_{\theta}}\leq \left \|T\right\|_{X_0\rightarrow Y_0}^{1-\theta}\left \|T\right\|_{X_1\rightarrow Y_1}^{\theta}
\end{equation} 
for all $0<\theta<1$. In particular, for any $f\in \mathrm{Pol}(\g)$, $X\in M_n$ and $1\leq p_0<p< p_1\leq \infty$ we have
\begin{equation}
\left \|f\right\|_p \leq \left \|f\right \|_{p_0}^{1-\theta} \left \|f\right\|_{p_1}^{\theta}\mathrm{~and~}\left \|XQ^{\frac{1}{p}}\right\|_{S^p_n}\leq \left \|XQ^{\frac{1}{p_0}}\right \|_{S^{p_0}_n}^{1-\theta}\left \|XQ^{\frac{1}{p_1}}\right\|_{S^{p_1}_n}^{\theta},
\end{equation}
where $\displaystyle \frac{1}{p}=\frac{1-\theta}{p_0}+\frac{\theta}{p_1}$ and $Q$ is a positive invertible matrix.
\end{theorem}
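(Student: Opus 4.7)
The plan is to treat the abstract interpolation theorem as a classical application of Hadamard's three-lines lemma, and then to obtain the two ``in particular'' inequalities by applying it to rank-one operators into appropriately interpolated $L^p$ and Schatten scales.

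For the abstract statement, my sketch follows the standard argument. Given $x\in(X_0,X_1)_\theta$ of norm less than $1$, the definition of the complex interpolation space provides a bounded continuous $\phi:\{z\in\Comp:0\leq\mathrm{Re}(z)\leq 1\}\to X_0+X_1$, holomorphic on the open strip, with $\phi(\theta)=x$ and with $\sup_t\|\phi(it)\|_{X_0}$ and $\sup_t\|\phi(1+it)\|_{X_1}$ bounded by $1$. Setting $M_i:=\|T\|_{X_i\to Y_i}$, form the rescaled function $\psi(z):=M_0^{z-1}M_1^{-z}(T\circ\phi)(z)$; it lies in $Y_0+Y_1$, is admissible, and has boundary norms bounded by $1$ on both vertical lines of the strip. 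Testing against an arbitrary functional $y^*\in(Y_0+Y_1)^*$ and applying the three-lines lemma to $y^*\circ\psi$ yields $\|\psi(\theta)\|_{(Y_0,Y_1)_\theta}\leq 1$, which rearranges to the claimed operator bound $\|T\|_{(X_0,X_1)_\theta\to(Y_0,Y_1)_\theta}\leq M_0^{1-\theta}M_1^\theta$.

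To derive the $L^p$ inequality, I would apply the abstract theorem to the rank-one map $T_f:\Comp\to L^{p_0}(\g)+L^{p_1}(\g)$, $\lambda\mapsto\lambda f$, whose endpoint operator norms are precisely $\|f\|_{p_0}$ and $\|f\|_{p_1}$. Reiteration of the defining relation $L^p(\g)=(L^\infty(\g),L^1(\g))_{1/p}$ produces $(L^{p_0}(\g),L^{p_1}(\g))_\theta=L^p(\g)$ with $\tfrac{1}{p}=\tfrac{1-\theta}{p_0}+\tfrac{\theta}{p_1}$, so the conclusion reads $\|f\|_p\leq\|f\|_{p_0}^{1-\theta}\|f\|_{p_1}^\theta$. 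The matrix inequality is handled identically via the rank-one map $\lambda\mapsto\lambda X$ from $\Comp$ into $M_n$, but with $M_n$ equipped at each endpoint with the weighted Schatten norm $\|\cdot\, Q^{1/p_i}\|_{S^{p_i}_n}$; by (\ref{eq4})--(\ref{eq5}) these are precisely the restrictions of $\ell^{p_i}(\widehat{\g})$ to a single component $M_{n_\alpha}$, and reiteration of $(\ell^\infty(\widehat{\g}),\ell^1(\widehat{\g}))_{1/p}=\ell^p(\widehat{\g})$ delivers the corresponding interpolation identification.

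The main obstacle is the interpolation identity for the weighted Schatten scale in the non-Kac case, where the $Q_\alpha$ matrices genuinely intervene and the Haar state fails to be tracial; this cannot be reduced to Hölder as in the commutative situation and must be invoked from the Pisier--Xu theory of noncommutative $L^p$-spaces. Fortunately, since the abstract theorem is quoted from Bergh--L\"ofstr\"om and the needed reiteration identities follow directly from the very definitions of $L^p(\g)$ and $\ell^p(\widehat{\g})$ as complex interpolation spaces, the substantive content of the ``in particular'' clause reduces to a single application of the abstract theorem per inequality.
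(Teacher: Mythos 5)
Your derivation of the two ``in particular'' inequalities --- applying the abstract theorem to the rank-one maps $\lambda\mapsto\lambda f$ and $\lambda\mapsto\lambda X$ with the compatible pairs $(L^{p_0}(\g),L^{p_1}(\g))$ and $(M_n,\|\cdot\,Q^{1/p_i}\|_{S^{p_i}_n})$ --- is exactly the paper's argument, and the abstract statement itself is only cited from Bergh--L\"ofstr\"om, so your three-lines sketch is supplementary. The one inaccuracy in that sketch is that testing $\psi$ against functionals of $(Y_0+Y_1)^*$ and invoking the three-lines lemma only controls $\|\psi(\theta)\|_{Y_0+Y_1}$; the bound on $\|\psi(\theta)\|_{(Y_0,Y_1)_\theta}$ follows instead because the rescaled $\psi$ is itself an admissible competitor in the definition of the interpolation norm.
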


\begin{proof}
The last argument follows from the linear maps $T_1:\Comp\rightarrow L^{\infty}(\g), 1\mapsto f$ and $T_2:\Comp\rightarrow M_n, 1\mapsto X$ respectively. The associated compatible pairs are $(L^{p_0}(\g),L^{p_1}(\g))$ and $(L^{p_0}(M_n,Q),L^{p_1}(M_n,Q))$, where $L^p(M_n,Q)$ is a vector space $M_n$ with a norm structure $\left \|A\right \|_{L^p(M_n,Q)}=\mathrm{tr}(\left |AQ^{\frac{1}{p}}\right |^p)^{\frac{1}{p}}$.
\end{proof}

\begin{corollary}\label{cor0}
Let $p\in [1,2)\cup (2,\infty]$ and $f\in \mathrm{Pol}(\g)$ with $\left \|f\right\|_{L^2(\g)}=1$. Then
\begin{equation}
\lim_{q\rightarrow 2}\frac{2q}{2-q}\log(\frac{\left \|f\right \|_{L^q(\g)}}{\left \|\widehat{f}\right \|_{\ell^{q'}(\widehat{\g})}}) \geq \frac{2p}{2-p}\log (\frac{\left \|f\right \|_{L^p(\g)}}{\left \|\widehat{f}\right \|_{\ell^{p'}(\widehat{\g})}}).
\end{equation}
\end{corollary}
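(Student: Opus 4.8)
The plan is to substitute $t=1/q$ and read both sides off a single auxiliary function. Put
\[
G(t)=\log\|f\|_{L^{1/t}(\g)}-\log\big\|\widehat{f}\big\|_{\ell^{(1/t)'}(\widehat{\g})},
\]
so that $G(1/q)$ is exactly the logarithm appearing in the statement. Two observations reduce everything to the local behaviour of $G$ at $t=1/2$: the coefficient satisfies $\frac{2q}{2-q}=\big((1/q)-\tfrac12\big)^{-1}$, and Plancherel together with $\|f\|_{L^2(\g)}=1$ forces $\|\widehat{f}\|_{\ell^2(\widehat{\g})}=1$, hence $G(1/2)=0$. Therefore the left-hand side is the derivative
\[
\lim_{q\to2}\frac{G(1/q)-G(1/2)}{(1/q)-1/2}=G'(\tfrac12),
\]
which is the quantity one identifies with the entropy sum $H(|f|^2,\varphi_{\g})+H(|\widehat{f}|^2,\widehat{\varphi}_{\widehat{\g}})$, linking the corollary to the entropic formulation, while the right-hand side is the difference quotient $\big(G(1/p)-G(1/2)\big)/\big((1/p)-\tfrac12\big)$ based at the same point.

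With this reduction the claim becomes the assertion that $G'(\tfrac12)$ dominates every difference quotient of $G$ based at $t=\tfrac12$, the sign of the factor $\frac{2p}{2-p}$ (equivalently of $(1/p)-\tfrac12$) automatically producing the correct comparison in each of the two regimes $p\in[1,2)$ and $p\in(2,\infty]$. The convexity input is supplied by Theorem \ref{lem3}: complex interpolation on the $L^p$-scale makes $t\mapsto\log\|f\|_{L^{1/t}(\g)}$ convex, and the dual interpolation makes $t\mapsto\log\|\widehat{f}\|_{\ell^{(1/t)'}(\widehat{\g})}$ a convex function of $1/q'=1-t$, hence of $t$. I would then try to convert these two separate convexities into the one-sided tangent-line estimate for $G$ itself.

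The step I expect to be the main obstacle is precisely this conversion: $G$ is a \emph{difference} of convex functions, so neither its convexity nor its concavity follows from Theorem \ref{lem3} applied termwise, and interpolating numerator and denominator separately yields bounds in the same direction that do not recombine for the ratio. To close the gap I would exploit the distinguished status of the self-dual exponent $q=2$ rather than bare convexity: the Hausdorff--Young inequality gives $G\ge0$ on $\{\tfrac12\le t\le1\}$ and $G\le0$ on $\{0\le t\le\tfrac12\}$, with equality forced at $t=\tfrac12$ by Plancherel, so that $t=\tfrac12$ is simultaneously the minimiser of $G$ on the right half and its maximiser on the left half. Feeding this saturation into the comparison, and treating $p<2$ and $p>2$ separately before recombining through the sign of $\frac{2p}{2-p}$, is the delicate part; the crux is to keep the interpolation estimate oriented so as to match the Hausdorff--Young sign of $G$ on each side of $t=\tfrac12$.
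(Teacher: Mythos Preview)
Your reduction is clean and correct: with $G(t)=\log\|f\|_{L^{1/t}}-\log\|\widehat f\|_{\ell^{(1/t)'}}$ the claim is exactly that every secant of $G$ through $(\tfrac12,0)$ lies below the tangent there. You are also right that this does \emph{not} follow from the separate log-convexities of $t\mapsto\log\|f\|_{L^{1/t}}$ and $t\mapsto\log\|\widehat f\|_{\ell^{1/(1-t)}}$, since $G$ is their difference. The obstacle you flag is not a technicality to be finessed with Hausdorff--Young sign information: it is fatal, because the inequality is false for $p\in[1,2)$.

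Take $\g=\tor$ and $f=(1+z)/\sqrt2$. Then $\|f\|_2=1$, $|\widehat f|^2=(\tfrac12,\tfrac12)$, $|f|^2=1+\cos\theta$, and one computes
\[
H(|f|^2,\varphi_\tor)+H(|\widehat f|^2,\widehat\varphi_{\z})=(\log 2-1)+\log 2=2\log 2-1\approx 0.386,
\]
while at $p=1$ the right-hand side is $2\log\bigl(\|f\|_1/\|\widehat f\|_\infty\bigr)=2\log(4/\pi)\approx 0.483$. So the asserted inequality fails. The paper's argument has the same gap you isolated: it applies Theorem~\ref{lem3} to the rank-one operator $T=\mathcal F|_{\Comp\cdot f}$ and writes $\|T\|_{q\to q'}\le\|T\|_{p\to p'}^{1-\theta}\|T\|_{2\to 2}^{\theta}$, but the complex interpolation of the one-dimensional domain $(\Comp\cdot f,\|\cdot\|_{L^p})$ and $(\Comp\cdot f,\|\cdot\|_{L^2})$ is the geometric-mean norm, not the $L^q$-norm; unwinding, the displayed inequality collapses to the log-convexity of $\|\widehat f\|_{\ell^{r}}$ alone and does not control the ratio. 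In the applications (e.g.\ Theorem~\ref{thm2}) what is actually used is that a \emph{uniform-in-$p$} lower bound on $\|f\|_p/\|\widehat f\|_{p'}$ survives the limit $p\to 2$, which needs no comparison between the limit and a fixed $p$; but the corollary as stated, and hence any attempt to prove it along your outline, cannot be salvaged on the range $p<2$.
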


\begin{proof}

First of all, the limit exists thanks to [Proposition 5.9, \cite{JiLiWu17}] and [Remark 5.10, \cite{JiLiWu17}]. Let us consider the restriction of the Fourier transform on the one-dimensional subspace $\Comp\cdot f\in L^p(\g)$ and set $T=\mathcal{F}\Bigr|_{\Comp\cdot f}$. If $1\leq p < 2$, then for any $p<q<2$ we have
\[\frac{\left \|\widehat{f}\right \|_{\ell^{q'}(\widehat{\g})}}{\left \|f\right\|_{L^q(\g)}}=\left \|T\right \|_{q\rightarrow q'}\leq \left \|T\right \|_{p\rightarrow p'}^{\frac{(2-q)p}{(2-p)q}}\cdot \left \|T\right\|_{2\rightarrow 2}^{1-\frac{(2-q)p}{(2-p)q}}=(\frac{\left \|\widehat{f}\right \|_{\ell^{p'}(\widehat{\g})}}{\left \|f\right\|_{L^p(\g)}})^{\frac{(2-q)p}{(2-p)q}}\]
by the Plancherel identity (see Subsection \ref{subsec-Fourier}) and Theorem \ref{lem3}.

Therefore, $\displaystyle \frac{2q}{2-q}\log(\frac{\left \|f\right\|_{L^q(\g)}}{\left \|\widehat{f}\right \|_{\ell^{q'}(\widehat{\g})}})\geq \frac{2p}{2-p}\log (\frac{\left \|f\right\|_{L^p(\g)}}{\left \|\widehat{f}\right \|_{\ell^{p'}(\widehat{\g})}})$ and taking limit as $q\nearrow 2$ completes the proof. Also, if $2<p\leq \infty$, then for any $2<q<p$ we have
\[\frac{\left \|f \right \|_{L^q(\g)}}{\left \|\widehat{f}\right \|_{\ell^{q'}(\widehat{\g})}}=\left \|T^{-1}\right \|_{q'\rightarrow q}\leq \left \|T^{-1}\right\|_{p'\rightarrow p}^{\frac{(q-2)p}{(p-2)q}}=(\frac{\left \|f\right\|_{L^p(\g)}}{\left \|\widehat{f}\right \|_{\ell^{p'}(\widehat{\g})}})^{\frac{(q-2)p}{(p-2)q}},\]
so that $\displaystyle \frac{2q}{2-q}\log(\frac{\left \|f\right\|_{L^q(\g)}}{\left \|\widehat{f}\right \|_{\ell^{q'}(\widehat{\g})}})\geq \frac{2p}{2-p}\log(\frac{\left \|f\right \|_{L^p(\g)}}{\left \|\widehat{f}\right \|_{\ell^{p'}(\widehat{\g})}})$, which allows us to obtain the conclusion.
\end{proof}

\begin{corollary}\label{cor0.1}
The inequality (\ref{ineq0.1})
\begin{equation}
\lim_{\delta\rightarrow 0}\inf_{\substack{\widehat{f} \in \ell^1(\z) :\left \|\widehat{f} \right\|_2=1,\\ supp(f)\subseteq U_{\delta}}} e^{H(\left |f\right |^2,\varphi_{\tor})+H(\left |\widehat{f}\right |^2,\widehat{\varphi}_{\z})}\geq \frac{e}{2}
\end{equation}
indeed holds where $U_{\delta}=\left \{e^{2\pi i \theta} \in \tor: \left |\theta\right |\leq \delta \right\}$.
\end{corollary}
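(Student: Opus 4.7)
The plan is to combine Corollary~\ref{cor0} (applied with $\g=\tor$) with the classical \emph{local} Hausdorff--Young inequality on the circle \cite{An93,Sj95,Ka00,GaMaPa03}. Differentiating the $L^{p}$-norms in $p$ at $p=2$ identifies the left-hand limit in Corollary~\ref{cor0} with the total entropy $H(|f|^2,\varphi_{\tor})+H(|\widehat f|^2,\widehat\varphi_{\z})$; hence, for every $p\in(1,2)$ and every $f$ with $\|f\|_{L^2(\tor)}=1$,
\begin{equation*}
H(|f|^2,\varphi_{\tor})+H(|\widehat f|^2,\widehat\varphi_{\z}) \;\geq\; \frac{2p}{2-p}\log\frac{\|f\|_{L^p(\tor)}}{\|\widehat f\|_{\ell^{p'}(\z)}}.
\end{equation*}
By approximation this extends from $\mathrm{Pol}(\tor)$ to $f\in L^2(\tor)$ with $\widehat f\in\ell^1(\z)$. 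The remaining task is to replace the ordinary Hausdorff--Young constant $1$ by something strictly smaller under the localization hypothesis $\mathrm{supp}(f)\subseteq U_\delta$.

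This is supplied by the local Hausdorff--Young inequality on the circle: for each $p\in(1,2)$ there are constants $B_{p,\delta}\leq 1$ satisfying
\begin{equation*}
\|\widehat{f}\|_{\ell^{p'}(\z)} \;\leq\; B_{p,\delta}\,\|f\|_{L^p(\tor)}\quad\text{whenever }\mathrm{supp}(f)\subseteq U_\delta,\qquad \lim_{\delta\to 0}B_{p,\delta}=A_p,
\end{equation*}
where $A_p:=\bigl(p^{1/p}/p'^{1/p'}\bigr)^{1/2}$ is the Babenko--Beckner constant on $\re$. The heuristic is that a function concentrated on a short arc of $\tor$ behaves like a compactly supported function on $\re$, and its Fourier coefficients on $\z$ Riemann-sum the continuous Fourier transform on $\re$, so the sharp Beckner constant is inherited in the limit $\delta\to 0$. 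Inserting this bound into the previous display and taking first the infimum over admissible $f$ and then $\delta\to 0$ yields
\begin{equation*}
\liminf_{\delta\to 0}\;\inf_{\substack{\widehat f\in\ell^1(\z):\,\|\widehat f\|_2=1\\ \mathrm{supp}(f)\subseteq U_\delta}} e^{H(|f|^2,\varphi_{\tor})+H(|\widehat f|^2,\widehat\varphi_{\z})} \;\geq\; A_p^{-2p/(2-p)}.
\end{equation*}

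It remains to optimize in $p$ by letting $p\nearrow 2$. Writing $g(s)=\frac{\log s}{s}$, one has $g'(2)=\frac{1-\log 2}{4}$ and $p'-p=2(2-p)+O\bigl((2-p)^2\bigr)$, so a Taylor expansion gives
\begin{equation*}
-\log A_p \;=\; \tfrac{1}{2}\bigl(g(p')-g(p)\bigr) \;=\; \tfrac{(1-\log 2)(2-p)}{4}+O\bigl((2-p)^2\bigr),
\end{equation*}
and therefore $-\frac{2p}{2-p}\log A_p\to 1-\log 2=\log(e/2)$ as $p\nearrow 2$. This delivers the claimed bound $\lim_{\delta\to 0}\inf e^{H+H}\geq e/2$. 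The only substantive step is the asymptotically sharp local Hausdorff--Young input---transferring the Babenko--Beckner constant from the continuous Fourier transform on $\re$ to Fourier series on $\z$ supported on small arcs---and this is precisely the content of the cited works; everything else is bookkeeping and a Taylor expansion.
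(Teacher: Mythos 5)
Your argument is correct and follows essentially the same route as the paper: Corollary~\ref{cor0} (interpolation/monotonicity in $p$, with the entropy sum identified as the $q\to 2$ limit via [Remark 5.10, \cite{JiLiWu17}]), the sharp local Hausdorff--Young inequality of \cite{Sj95}, \cite{Ka00} giving the Babenko--Beckner constant $p^{1/(2p)}/(p')^{1/(2p')}$ for functions supported on $U_\delta$, and then the limit $p\nearrow 2$ producing $\log(e/2)$. The only difference is cosmetic: you make the Taylor expansion at $p=2$ explicit and phrase the local constant as $B_{p,\delta}\to A_p$, whereas the paper uses the $(1+2\epsilon)$ formulation and handles the extension beyond $\mathrm{Pol}(\tor)$ by citing the finiteness of the entropies for $\widehat f\in\ell^1(\z)$ rather than your (slightly looser) appeal to approximation.
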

\begin{proof} 

Thanks to [Main Theorem, \cite{Sj95}] or [Theorem 2, \cite{Ka00}], for any $\epsilon>0$ and $1<p<2$, there exists $\delta>0$ such that
\[\sup \left \{ \frac{\left \|\widehat{f}\right \|_{\ell^{p'}(\z)}}{\left \|f\right\|_{L^p(\tor)}}:f\in L^p(\tor)\setminus \left \{0\right\}~\mathrm{with~supp}(f)\subseteq U_{\delta} \right\}\leq (1+2\epsilon)\frac{p^{\frac{1}{2p}}}{(p')^{\frac{1}{2p'}}}.\]

 This implies that the map $L^p(U_{\delta})\rightarrow \ell^{p'}(\z), f\mapsto \widehat{f}$ has norm less than $(1+2\epsilon)\displaystyle \frac{p^{\frac{1}{2p}}}{(p')^{\frac{1}{2p'}}}$ when $L^p(U_{\delta})\subseteq L^p(\tor)$ in natural sense. Then, using the fact that $H(\left |f\right |^2,\varphi_{\tor}), H(\left |\widehat{f}\right |^2,\widehat{\varphi}_{\z})<\infty$ whenever $\widehat{f}\in \ell^1(\z)$, [Remark 5.10, \cite{JiLiWu17}] and Corollary \ref{cor0}, we have

\begin{align*}
\lim_{\delta\rightarrow 0}\inf_{\substack{\widehat{f}\in \ell^1(\z):\left \|\widehat{f} \right\|_2=1,\\ supp(f)\subseteq U_{\delta}}} H(\left |f\right |^2,\varphi_{\tor})+H(\left |\widehat{f}\right |^2,\widehat{\varphi}_{\z})&=\lim_{\delta\rightarrow 0}\inf_{\substack{\widehat{f} \in \ell^1(\z) :\left \|\widehat{f} \right\|_2=1,\\ supp(f)\subseteq U_{\delta}}} \lim_{q\rightarrow 2}\frac{2q}{2-q}\log(\frac{\left \|f\right \|_q}{\left \|\widehat{f}\right \|_{q'}})\\
& \geq  \lim_{\delta\rightarrow 0} \inf_{\substack{\widehat{f} \in \ell^1(\z) :\left \|\widehat{f} \right\|_2=1,\\ supp(f)\subseteq U_{\delta}}} \frac{2p}{2-p}\log(\frac{\left \|f\right \|_{L^p(\tor)}}{\left \|\widehat{f}\right \|_{\ell^{p'}(\z)}}) \\
& \geq \frac{2p}{2-p}\log(\frac{(p')^{\frac{1}{2p'}}}{(1+2\epsilon)\cdot p^{\frac{1}{2p}}}).
\end{align*}

Lastly, taking limits as $\displaystyle \epsilon\rightarrow 0$ and $p\nearrow 2$ completes the proof.

\end{proof}

\subsection{Fourier analysis and $\Lambda(p)$ sets}\label{subsec-Fourier}

Let $\g$ be a compact quantum group and $f\in L^{\infty}(\g)$. The sequence of the Fourier coefficients $\widehat{f}=(\widehat{f}(\alpha))_{\alpha\in \mathrm{Irr}(\g)}\in \displaystyle \ell^{\infty}(\widehat{\g})$ is defined by
\[\widehat{f}(\alpha)_{i,j}=\varphi((u^{\alpha}_{j,i})^*f)\]
for all $\alpha\in \mathrm{Irr}(\g)$ and $1\leq i,j\leq n_{\alpha}$. In particular, for any $f\in \mathrm{Pol}(\g)$ we have
\begin{align}
f=\sum_{\alpha\in \mathrm{Irr}(\g)}d_{\alpha} (\widehat{f}(\alpha)Q_{\alpha})_{i,j}u^{\alpha}_{j,i}=\sum_{\alpha\in \mathrm{Irr}(\g)}d_{\alpha} \mathrm{tr}(\widehat{f}(\alpha)Q_{\alpha}u^{\alpha}).
\end{align}
and $\widehat{f}(\alpha)= 0$ for all but finitely many $\alpha$. The map $f\mapsto \widehat{f}$ extends to a contractive map $\mathcal{F}:L^1(\g)\rightarrow \ell^{\infty}(\widehat{\g})$ and an onto isometry $\mathcal{F}_2:L^2(\g)\rightarrow \ell^2(\widehat{\g})$ for each case. In both cases, the maps are called the Fourier transform.

Now, among various notions for lacunarity, let us introduce the $\Lambda(p)$-sets.

\begin{definition}
Let $\g$ be a compact quantum group and $2\leq p<\infty$. We say that $E\subseteq \mathrm{Irr}(\g)$ is a $\Lambda(p)$-set if there exists a universal constant $K=K(E)$ such that
\[\left \|f\right \|_{L^p(\g)}\leq K \left \|f\right\|_{L^2(\g)}\]
for all $f\in \displaystyle \mathrm{Pol}_E(\g)=\left \{f\in \mathrm{Pol}(\g): \widehat{f}(\alpha)=0~\mathrm{for~all~}\alpha\notin E \right\}$.
\end{definition}

\subsection{The entropy and the R$\acute{\bold{e}}$nyi entropy}

In this subsection, we gather some basic notions and properties for entropic quantities. Also, we explain why the entropic uncertainty principle dominates the Donoho-Stark uncertainty principle if $\g$ is of Kac type in Proposition \ref{prop1}.

\begin{definition}
Let $M$ be a von Neumann algebra with $\phi$ a normal semifinite faithful tracial weight.
\begin{enumerate}
\item  For any $f\in M$ such that $\phi(\left |f\right |^p)<\infty$ with $p\in (0,1)\cup (1,\infty)$, the R$\acute{e}$nyi entropy of $f$ with respect to $\phi$ is defined as
\[h_p(f,\phi)=\frac{p}{1-p}\log (\left \|f\right\|_{L^p(M,\phi)})=\frac{1}{1-p}\log(\phi(\left |f\right |^p)).\]
\item For $\rho\in M_+$ such that $\phi(\rho)=1$, the entropy of $\rho\in L^1(M,\phi)_+$ with respect to $\phi$ is defined as
\[H(\rho,\phi)=-\phi(\rho \log(\rho)).\]

\end{enumerate}
\end{definition}

When the entropy of $A\in (M_n)_+$ is discussed with respect to the canonical trace, we simply denote it by $H(A)$.

\begin{lemma}\label{lem2}
\begin{enumerate}
\item Let $\tau$ be a normal faithful finite tracial weight on a von Neumann algebra $M$. Then for any $\rho \in M_+$ with $\tau(\rho)=1$ we have
\begin{equation}
H(\rho,\tau)=\lim_{p\rightarrow 1}\frac{1-\left \|\rho \right \|_{L^p(M,\tau)}}{p-1}=\lim_{p\rightarrow 1}\frac{p}{1-p}\log(\left \| \rho \right\|_{L^p(M,\tau)})=\lim_{p\rightarrow 1} h_p(\rho,\tau) .
\end{equation}

In particular, for any $\xi\in M$ such that $\tau(\xi^*\xi)=1$, we have
\begin{equation}
H(\xi^*\xi , \tau)=\lim_{p\rightarrow 1}\frac{p}{1-p}\log(\left \|\xi^*\xi \right \|_p)=\lim_{p\rightarrow 2}\frac{2p}{2-p}\log (\left \|\xi\right \|_p).
\end{equation}

\item Let $Q\in M_n$ be a positive invertible matrix. Then for any $1<p<\infty$ and $X\in M_n$ we have
\begin{equation}
\frac{\log(\left \|XQ^{\frac{1}{2p}}\right \|_{S^{2p}_n}^{2p})-\log(\left \|XQ^{\frac{1}{2}}\right \|_{HS}^2)}{1-p}\leq \log(\mathrm{tr}(Q))-\log (\left \|XQ^{\frac{1}{2}}\right \|_{HS}^2) .
\end{equation}

\end{enumerate}
\end{lemma}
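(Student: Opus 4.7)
For part (1) the plan is a first-order expansion of $\phi(p) := \tau(\rho^p)$ around $p = 1$. Finiteness of $\tau$ together with boundedness of $\rho$ lets one differentiate through the trace (via Borel functional calculus and dominated convergence) to get $\phi(1) = \tau(\rho) = 1$ and $\phi'(1) = \tau(\rho\log\rho) = -H(\rho,\tau)$. Consequently $\log\|\rho\|_p = \frac{1}{p}\log\phi(p)$ vanishes at $p=1$ with derivative $-H(\rho,\tau)$, so the three quantities $\frac{p}{1-p}\log\|\rho\|_p$, $h_p(\rho,\tau)$, and (via $1-e^x = -x + O(x^2)$) $\frac{1-\|\rho\|_p}{p-1}$ all converge to $H(\rho,\tau)$. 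The $\xi$-version is then obtained by applying the first identity to $\rho = \xi^*\xi$ and performing the substitution $p \mapsto p/2$, using $\|\xi^*\xi\|_p = \|\xi\|_{2p}^2$.

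For part (2) my plan is to rewrite the claim as a single application of Schatten H\"older. Set
\begin{equation*}
Y := Q^{\frac{1}{2p}} X^*X\, Q^{\frac{1}{2p}} \geq 0 \quad\text{and}\quad Z := Q^{\frac{p-1}{p}} \geq 0.
\end{equation*}
Three short cyclic-trace computations give
\begin{equation*}
\|XQ^{\frac{1}{2p}}\|_{S^{2p}_n}^{2p} = \mathrm{tr}(Y^p),\qquad \|XQ^{\frac{1}{2}}\|_{HS}^2 = \mathrm{tr}(X^*XQ) = \mathrm{tr}(YZ),\qquad \mathrm{tr}(Q) = \mathrm{tr}(Z^{p/(p-1)}).
\end{equation*}
Schatten H\"older with conjugate exponents $p$ and $p/(p-1)$ then yields
\begin{equation*}
\mathrm{tr}(YZ) \leq \mathrm{tr}(Y^p)^{1/p}\cdot\mathrm{tr}(Z^{p/(p-1)})^{(p-1)/p}.
\end{equation*}
Raising to the $p$th power produces $\|XQ^{\frac{1}{2}}\|_{HS}^{2p} \leq \|XQ^{\frac{1}{2p}}\|_{S^{2p}_n}^{2p}\cdot\mathrm{tr}(Q)^{p-1}$, and taking logarithms then dividing through by the negative number $1-p$ (recall $p > 1$) reverses the inequality and delivers precisely the displayed bound.

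Neither step carries a serious obstacle. In part (1) the only subtlety is justifying differentiation under $\tau$, which is routine given $\tau(1) < \infty$ and $\|\rho\|_\infty < \infty$. In part (2) the only real content is choosing the correct pair $(Y,Z)$ so that all three quantities appearing in the statement slot into the Schatten H\"older template; once this decomposition is guessed, the rest is mechanical. I therefore expect the main expository challenge to be simply verifying the three cyclic-trace identities cleanly, since they are what justifies the rewriting of the statement into its Schatten H\"older form.
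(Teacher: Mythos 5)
Your proposal is correct and follows essentially the same route as the paper: part (1) is the derivative of $p\mapsto\tau(\rho^p)$ at $p=1$ (the paper justifies the differentiation by citing Terp's norm-convergence of the difference quotients of $h^p$, which is the same content as your functional-calculus/uniform-convergence argument), and the $\xi$-version is the same substitution $p\mapsto 2p$ via $\|\xi^*\xi\|_p=\|\xi\|_{2p}^2$. Your part (2) is the identical H\"older inequality as well — the paper applies it in the factored form $\|XQ^{1/2}\|_{HS}\le\|XQ^{1/(2p)}\|_{S^{2p}}\|Q^{1/(2p')}\|_{S^{2p'}}$, which upon squaring is exactly your $\mathrm{tr}(YZ)\le\mathrm{tr}(Y^p)^{1/p}\mathrm{tr}(Z^{p'})^{1/p'}$ with $Y=|XQ^{1/(2p)}|^2$ and $Z=Q^{1/p'}$.
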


\begin{proof}
\begin{enumerate}

\item By [Lemma 18, \cite{Te81}], we have 
\[\lim_{p\rightarrow p_0} \frac{h^p-h^{p_0}}{p-p_0}=h^{p_0} \log(h)\]
with respect to the norm topology for $h\in M_+$ and $p_0\in (0,\infty)$. This implies
\[\frac{d}{dp}\Bigr|_{p=1}\left \|\rho \right \|_{L^p(\tau)}=\frac{d}{dp}\Bigr|_{p=1}\left \| \rho\right \|_{L^p(\tau)}^p=\tau(\rho\log(\rho)).\]

\item By the H$\ddot{\mathrm{o}}$lder inequality, we have
\[\left \|Q^{\frac{1}{2p'}}\right \|_{S^{2p'}_n}\left \|XQ^{\frac{1}{2p}}\right \|_{S^{2p}_n}\geq \left \|XQ^{\frac{1}{2}}\right \|_{HS},\]
so that
\begin{align*}
\frac{\log (\left \|XQ^{\frac{1}{2p}}\right \|_{S^{2p}_n}^{2p}) -\log(\left \|XQ^{\frac{1}{2}}\right \|_{HS}^2)}{1-p}&\leq \frac{2p \log(\mathrm{tr}(Q)^{-\frac{1}{2p'}})+ (2p-2)\log (\left \|XQ^{\frac{1}{2}}\right \|_{HS})}{1-p}\\
&=\log(\mathrm{tr}(Q))-\log (\left \|XQ^{\frac{1}{2}}\right \|_{HS}^2)
\end{align*}
\end{enumerate}
\end{proof}

\begin{proposition}\label{prop2}
Suppose $A=(A(\alpha))_{\alpha\in \mathrm{Irr}(\g)}\in \ell^{2}(\widehat{\g})$ satisfies that $\mathrm{supp}(A)$ is finite and $\left \|A\right \|_{\ell^2(\widehat{\g})}=1$. Then the following holds.
\begin{enumerate}
\item We have
\begin{equation}
\lim_{p\rightarrow 2}\frac{2p}{2-p}\log(\left \|A\right \|_{\ell^p(\widehat{\g})})\leq \log(\left |\mathrm{supp}(A)\right |)+2\max_{\alpha\in \mathrm{supp}(A)}\log(d_{\alpha}).
\end{equation}

In particular, if $\g$ is of Kac type, we have

\begin{equation}
\sum_{\alpha\in \mathrm{supp}(A)}n_{\alpha}\log(n_{\alpha})\left \|A(\alpha)\right \|_{HS}^2 \leq H(A^*A, \widehat{\varphi}_{\widehat{\g}})\leq \log(\left | \mathrm{supp}(A)\right |)+2\max_{\alpha\in \mathrm{supp}(A)} \log(n_{\alpha}).
\end{equation}

\item For any $1\leq p\leq 2$ we have
\begin{equation}
\left \|A\right\|_{\ell^p(\widehat{\g})}^p \geq \sum_{\alpha\in S} d_{\alpha}\frac{\left \|A(\alpha)Q_{\alpha}^{\frac{1}{2}}\right \|_{HS}^2}{\left \|A(\alpha)\right \|_{\infty}^{2-p}} \geq \frac{1}{\left \|A\right\|_{\infty}^{2-p}}.
\end{equation}
\end{enumerate}
\end{proposition}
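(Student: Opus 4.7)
The plan is to dispatch each of the three inequalities with a single application of complex interpolation (Lemma \ref{lem3}).

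For (1), I would first interpolate at the level of $\ell^p$-spaces. The identification $\ell^p(\widehat{\g}) = (\ell^1(\widehat{\g}), \ell^2(\widehat{\g}))_\theta$ with $\theta = 2(p-1)/p$, combined with $\|A\|_{\ell^2(\widehat{\g})} = 1$, yields
\begin{equation*}
\|A\|_{\ell^p(\widehat{\g})}^p \leq \|A\|_{\ell^1(\widehat{\g})}^{2-p}\|A\|_{\ell^2(\widehat{\g})}^{2(p-1)} = \|A\|_{\ell^1(\widehat{\g})}^{2-p},
\end{equation*}
so that $\frac{2p}{2-p}\log\|A\|_{\ell^p(\widehat{\g})}\leq 2\log\|A\|_{\ell^1(\widehat{\g})}$ uniformly in $p\in[1,2)$. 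Passing to the limit (whose existence is granted by [Proposition 5.9, \cite{JiLiWu17}] and [Remark 5.10, \cite{JiLiWu17}]) reduces matters to the estimate $\|A\|_{\ell^1(\widehat{\g})}^2 \leq |\mathrm{supp}(A)|\cdot \max_\alpha d_\alpha^2$. I would produce this by two consecutive Cauchy--Schwarz inequalities: factoring $A(\alpha)Q_\alpha = (A(\alpha)Q_\alpha^{1/2})\cdot Q_\alpha^{1/2}$ inside the Schatten $1$-norm yields $\|A(\alpha)Q_\alpha\|_{S^1_{n_\alpha}} \leq \|A(\alpha)Q_\alpha^{1/2}\|_{HS}\, d_\alpha^{1/2}$, and a second Cauchy--Schwarz on the outer sum over $\alpha\in\mathrm{supp}(A)$ then delivers $\|A\|_{\ell^1(\widehat{\g})} \leq \bigl(\sum_{\alpha\in\mathrm{supp}(A)} d_\alpha^2\bigr)^{1/2}\|A\|_{\ell^2(\widehat{\g})}$.

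For the Kac-type refinement, Lemma \ref{lem2}(1) identifies the limit above with $H(A^*A, \widehat{\varphi}_{\widehat{\g}})$ since $\widehat{\varphi}_{\widehat{\g}}$ is tracial. Writing $A(\alpha)^*A(\alpha) = p_\alpha\, \widehat\rho_\alpha$ with $p_\alpha := \|A(\alpha)\|_{HS}^2$ and $\widehat\rho_\alpha$ a density on $M_{n_\alpha}$, a direct expansion gives
\begin{equation*}
H(A^*A, \widehat{\varphi}_{\widehat{\g}}) = -\sum_{\alpha\in \mathrm{supp}(A)} P_\alpha \log P_\alpha + \sum_{\alpha\in \mathrm{supp}(A)} P_\alpha \log n_\alpha + \sum_{\alpha\in \mathrm{supp}(A)} P_\alpha\, H(\widehat\rho_\alpha),
\end{equation*}
where $P_\alpha := n_\alpha p_\alpha$ defines a probability distribution on $\mathrm{supp}(A)$. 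The lower bound is then immediate from the non-negativity of the Shannon entropy of $(P_\alpha)$ and of each $H(\widehat\rho_\alpha)$; the upper bound follows by combining $-\sum P_\alpha \log P_\alpha \leq \log|\mathrm{supp}(A)|$ with $H(\widehat\rho_\alpha)\leq \log n_\alpha$.

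For (2), I would interpolate block-by-block. Lemma \ref{lem3} applied with the pair $(p,\infty)$ at parameter $\theta = (2-p)/2$ produces the pointwise bound $\|A(\alpha)Q_\alpha^{1/2}\|_{HS}^2 \leq \|A(\alpha)Q_\alpha^{1/p}\|_{S^p_{n_\alpha}}^p\, \|A(\alpha)\|_{\infty}^{2-p}$. Rearranging and weighting by $d_\alpha$ before summing yields the first inequality of (2), and the second follows from $\|A(\alpha)\|_\infty \leq \|A\|_{\ell^\infty(\widehat{\g})}$ together with the Plancherel identity $\sum_\alpha d_\alpha\|A(\alpha)Q_\alpha^{1/2}\|_{HS}^2 = 1$. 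No step is genuinely subtle; the only point worth flagging is the existence of the limit in (1), which I would simply quote from \cite{JiLiWu17}.
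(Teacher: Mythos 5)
Your proof is correct. Part (2) and the passage from the general bound to the Kac-type bounds are essentially the paper's argument, but your treatment of the main inequality in (1) takes a genuinely different route. The paper evaluates $\lim_{p\to 2}\frac{2p}{2-p}\log\|A\|_{\ell^p(\widehat{\g})}$ exactly as a derivative, namely $\sum_\alpha d_\alpha\|A(\alpha)Q_\alpha^{1/2}\|_{HS}^2\cdot\lim_{t\to1}\frac{\log\|A(\alpha)Q_\alpha^{1/2t}\|_{S^{2t}}^{2t}-\log\|A(\alpha)Q_\alpha^{1/2}\|_{HS}^2}{1-t}$, controls each blockwise limit by Lemma \ref{lem2}(2), and then bounds $\sum_\alpha b_\alpha\log(d_\alpha^2/b_\alpha)$ (with $b_\alpha=d_\alpha\|A(\alpha)Q_\alpha^{1/2}\|_{HS}^2$) using the Shannon entropy of $(b_\alpha)$. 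You instead prove the uniform bound $\frac{2p}{2-p}\log\|A\|_{\ell^p(\widehat{\g})}\leq 2\log\|A\|_{\ell^1(\widehat{\g})}$ on $[1,2)$ by interpolating against $\ell^1(\widehat{\g})$, and then estimate $\|A\|_{\ell^1(\widehat{\g})}\leq(\sum_{\alpha\in\mathrm{supp}(A)}d_\alpha^2)^{1/2}$ by two applications of Cauchy--Schwarz; both intermediate quantities are dominated by $\log(\sum_{\alpha\in\mathrm{supp}(A)}d_\alpha^2)\leq\log|\mathrm{supp}(A)|+2\max_\alpha\log d_\alpha$, so the final bounds coincide (the paper's intermediate expression is a slightly sharper entropy-type bound, while yours avoids Lemma \ref{lem2}(2) and the derivative computation entirely). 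Your exact decomposition $H(A^*A,\widehat{\varphi}_{\widehat{\g}})=H((P_\alpha))+\sum_\alpha P_\alpha\log n_\alpha+\sum_\alpha P_\alpha H(\widehat{\rho}_\alpha)$ in the Kac case is cleaner than the paper's and delivers the lower and upper bounds simultaneously. One small citation quibble: the existence of the two-sided limit in (1) is more directly a consequence of the differentiability of $p\mapsto\|A\|_{\ell^p(\widehat{\g})}^p$ for finitely supported $A$ (via [Lemma 18, \cite{Te81}], as in the proof of Lemma \ref{lem2}(1)) than of [Proposition 5.9, \cite{JiLiWu17}], which concerns the ratio involving the Fourier transform; since your bound is uniform on $[1,2)$ this does not affect the validity of the conclusion.
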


\begin{proof}
\begin{enumerate}

\item First of all, we have
\begin{align*}
&\lim_{p\rightarrow 2}\frac{2p}{2-p}\log(\left \|A\right \|_{\ell^p(\widehat{\g})})=\lim_{t\rightarrow 1} \frac{2t}{1-t} \log (\left \|A \right \|_{\ell^{2t}(\widehat{\g})}) = \lim_{t\rightarrow 1} \frac{\left \|A\right \|_{\ell^{2t}(\widehat{\g})}^{2t}-1}{1-t}\\
&=\sum_{\alpha\in \mathrm{supp}(A)} d_{\alpha} \lim_{t\rightarrow 1}\frac{\left \|A(\alpha)Q_{\alpha}^{\frac{1}{2t}}\right \|_{S^{2t}_{n_{\alpha}}}^{2t}- \left \|A(\alpha)Q_{\alpha}^{\frac{1}{2}}\right \|_{HS}^2 }{1-t}\\
&=\sum_{\alpha\in \mathrm{supp}(A)} d_{\alpha} \left \|A(\alpha)Q_{\alpha}^{\frac{1}{2}}\right \|_{HS}^2 \lim_{t\rightarrow 1}\frac{\log(\left \|A(\alpha)Q_{\alpha}^{\frac{1}{2t}}\right \|_{S^{2t}_{n_{\alpha}}}^{2t})-\log(\left \|A(\alpha)Q_{\alpha}^{\frac{1}{2}}\right \|_{HS}^2)}{1-t}.
\end{align*}
Then, by Lemma \ref{lem2} (2),
\[\leq \sum_{\alpha\in \mathrm{supp}(A)} d_{\alpha} \left \|A(\alpha)Q_{\alpha}^{\frac{1}{2}}\right \|_{HS}^2 (\log(\mathrm{tr}(Q_{\alpha}))-\log(\left \|A(\alpha)Q_{\alpha}^{\frac{1}{2}}\right \|_{HS}^2))\]

For convenience, let us set $d_{\alpha}\left \|A(\alpha)Q_{\alpha}^{\frac{1}{2}}\right\|_{HS}^2=b_{\alpha}$. Then, since $(b_{\alpha})_{\alpha\in \mathrm{supp}(A)}$ is a probability distribution, we have
\begin{align*}
\lim_{p\rightarrow 2}\frac{2p}{2-p}\log(\left \|A\right \|_{\ell^p(\widehat{\g})})& \leq \sum_{\alpha\in \mathrm{supp}(A)} b_{\alpha}\log(\frac{d_{\alpha}^2}{b_{\alpha}}) \\
&\leq \max_{\alpha\in \mathrm{supp}(A)}\log(d_{\alpha}^2)+H((b_{\alpha})_{\alpha\in \mathrm{supp}(A)})\\
&\leq \max_{\alpha\in \mathrm{supp}(A)}\log(d_{\alpha}^2)+ \log (\left | \mathrm{supp}(A)\right |)
\end{align*}

If $\g$ is of Kac type, our assumption implies $\left \|A(\alpha)\right\|_{HS} \leq \displaystyle \frac{1}{\sqrt{n_{\alpha}}}$, so that
\[H(\left |A(\alpha)\right |^2)\geq -\mathrm{tr}(\left |A(\alpha)\right |^2)\log(\mathrm{tr}(\left |A(\alpha)\right |^2))\geq \left \|A(\alpha)\right \|_{HS}^2 \log(n_{\alpha})\]
which leads to our conclusion.

\item Since $\left \|A(\alpha)Q_{\alpha}^{\frac{1}{2}}\right\|_{HS}\leq \left \|A(\alpha)Q_{\alpha}^{\frac{1}{p}}\right\|_{S^p}^{\frac{p}{2}}\left \|A(\alpha)\right\|_{\infty}^{1-\frac{p}{2}}$ by Theorem \ref{lem3}, we have

\begin{align*}
\sum_{\alpha \in S}n_{\alpha}\mathrm{tr}(\left |A(\alpha) Q_{\alpha}^{\frac{1}{p}}\right |^p)&\geq \sum_{\alpha\in S} n_{\alpha}\frac{\left \|A(\alpha)Q_{\alpha}^{\frac{1}{2}}\right \|_{HS}^2}{\left \|A(\alpha)\right \|_{\infty}^{2-p}} \geq \frac{1}{\left \|A\right\|_{\infty}^{2-p}}.
\end{align*}

\end{enumerate}
\end{proof}

\begin{remark}(A rephrasement of [Theorem 5.15, \cite{JiLiWu17}])

Throughout this paper, in order to avoid many notations and explanations for the modular theory, we will describe the entropic uncertainty principle for a compact quantum group $\g$ by
\begin{equation}\label{eq7}
e^{\lim_{q\rightarrow 2}\frac{2q}{2-q}\log(\frac{\left \|f\right \|_{L^q(\g)}}{\left \|\widehat{f}\right\|_{\ell^{q'}(\widehat{\g})}})}\geq 1~\mathrm{for~all~}f\in \mathrm{Pol}(\g).
\end{equation}

Then, from Proposition \ref{prop2} (1) and (\ref{eq7}) , we have the following general bounds:

\begin{equation}\label{eq6}
1\leq e^{\lim_{p\rightarrow 2}\frac{2p}{2-p}\log(\frac{\left \|f\right \|_{L^p(\g)}}{\left \| \widehat{f}\right \|_{\ell^{p'}(\widehat{\g})}})} \leq d_{\alpha}^2
\end{equation}
for all $f\in \mathrm{Pol}_{\alpha}(\g)=\mathrm{Pol}_{\left \{\alpha \right\}}(\g)$.

\end{remark}

The following proposition explains how the study of entropic uncertainty relations dominates the Donoho-Stark uncertainty relations.

\begin{proposition} \label{prop1}
Let $\g$ be a compact quantum group of Kac type. Then for any $f\in \mathrm{Pol}(\g)$ with $\left \|f\right\|_{L^2(\g)}=1$ we have
\begin{equation}
\varphi_{\g}(s(f))\widehat{\varphi}_{\widehat{\g}}(s(\widehat{f}))=\varphi_{\g}(s(f))\sum_{\alpha\in \mathrm{Irr}(\g)}n_{\alpha}\mathrm{rank}(\widehat{f}(\alpha))\geq e^{H(\left |f\right |^2,\widehat{\varphi}_{\g})+H(\left |\widehat{f}\right |^2,\widehat{\varphi}_{\widehat{\g}})},
\end{equation}
where $s(T)$ is the support projection of a bounded operator $T\in B(H)$, i.e. the orthogonal projection onto the closure of range of $T$.
\end{proposition}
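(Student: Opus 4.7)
The plan is to reduce the claim to the elementary inequality $H(\rho, \tau) \leq \log(\tau(s(\rho)))$ valid for any faithful tracial weight $\tau$ on a von Neumann algebra $M$ and any $\rho \in M_+$ with $\tau(\rho) = 1$. Applied once on the quantum group side and once on its discrete dual, the two bounds multiply after exponentiation to give exactly the desired relation. The formula for $\widehat{\varphi}_{\widehat{\g}}(s(\widehat{f}))$ in terms of ranks is then just an unpacking of definitions in the Kac case.

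First I would prove the auxiliary lemma: if $\rho\in M_+$ with $\tau(\rho)=1$, then
\begin{equation*}
H(\rho,\tau)=-\tau(\rho\log \rho)\leq \log(\tau(s(\rho))).
\end{equation*}
Writing the spectral decomposition $\rho=\sum_i \lambda_i p_i$ with $\lambda_i>0$ and $p_i$ nonzero spectral projections (so $s(\rho)=\sum_i p_i$), set $\mu_i=\lambda_i\tau(p_i)$, which is a probability distribution. Then
\begin{equation*}
H(\rho,\tau)=-\sum_i \mu_i \log \lambda_i=\sum_i \mu_i \log\!\left(\tfrac{\tau(p_i)}{\mu_i}\right)\leq \log\!\Bigl(\sum_i \tau(p_i)\Bigr)=\log(\tau(s(\rho))),
\end{equation*}
by Jensen's inequality applied to the concave function $\log$.

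Next I apply this to $\rho=|f|^2$ with $\tau=\varphi_{\g}$; the hypothesis $\|f\|_{L^2(\g)}=1$ guarantees $\varphi_{\g}(|f|^2)=1$, while $s(|f|^2)=s(f^*f)=s(f)$. By the Plancherel identity (Subsection \ref{subsec-Fourier}) we also have $\widehat{\varphi}_{\widehat{\g}}(|\widehat{f}|^2)=\|\widehat{f}\|_{\ell^2(\widehat{\g})}^2=1$, so the lemma again gives $H(|\widehat{f}|^2,\widehat{\varphi}_{\widehat{\g}})\leq \log(\widehat{\varphi}_{\widehat{\g}}(s(\widehat{f})))$. Adding the two bounds and exponentiating yields
\begin{equation*}
e^{H(|f|^2,\varphi_{\g})+H(|\widehat{f}|^2,\widehat{\varphi}_{\widehat{\g}})}\leq \varphi_{\g}(s(f))\,\widehat{\varphi}_{\widehat{\g}}(s(\widehat{f})).
\end{equation*}

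Finally, since $\g$ is of Kac type we have $Q_{\alpha}=\mathrm{Id}_{n_{\alpha}}$ and $d_{\alpha}=n_{\alpha}$. The support projection of $\widehat{f}=(\widehat{f}(\alpha))_{\alpha}$ in $\ell^{\infty}(\widehat{\g})$ is block-diagonal with blocks $s(\widehat{f}(\alpha))\in M_{n_{\alpha}}$, hence
\begin{equation*}
\widehat{\varphi}_{\widehat{\g}}(s(\widehat{f}))=\sum_{\alpha\in \mathrm{Irr}(\g)} n_{\alpha}\,\mathrm{tr}(s(\widehat{f}(\alpha)))=\sum_{\alpha\in \mathrm{Irr}(\g)} n_{\alpha}\,\mathrm{rank}(\widehat{f}(\alpha)),
\end{equation*}
which completes the proof. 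There is no real obstacle here; the only mildly delicate point is justifying the spectral-type argument for $\rho$ in a general tracial von Neumann algebra (one may freely reduce to the abelian von Neumann subalgebra generated by $\rho$, where the classical Jensen inequality applies verbatim).
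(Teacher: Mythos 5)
Your proof is correct, but it takes a genuinely different route from the paper. The paper deduces the statement from a family of $L^p$-inequalities: it writes $f=s(f)\cdot f$, applies the H\"older inequality to get $\varphi_{\g}(s(f))\geq (\left\|f\right\|_p/\left\|f\right\|_2)^{\frac{2p}{2-p}}$ and the analogous bound on the dual side, and then lets $p\nearrow 2$, using Lemma \ref{lem2} (1) to identify the limit of the R\'enyi-type quantities with the two entropies. You instead prove the single clean inequality $H(\rho,\tau)\leq \log(\tau(s(\rho)))$ directly from the spectral decomposition and Jensen's inequality, and apply it once on each side; this avoids R\'enyi entropies, interpolation, and the limit $p\to 2$ altogether, and is arguably the more transparent argument (it is essentially the paper's H\"older step ``differentiated at $p=2$''). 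What the paper's route buys is consistency with its global framework, in which all entropies are systematically realized as limits $\lim_{p\to 2}\frac{2p}{2-p}\log(\cdot)$ of $L^p$-norms, which it needs elsewhere anyway. One small point to tidy in your write-up: $s(|f|^2)=s(f^*f)$ is the \emph{right} support of $f$ (the projection onto $(\ker f)^{\perp}$), whereas the paper's $s(f)$ is the projection onto $\overline{\mathrm{ran}(f)}$, i.e.\ the left support; these need not coincide, but they are Murray--von Neumann equivalent via the polar decomposition, so $\varphi_{\g}(s(|f|^2))=\varphi_{\g}(s(f))$ in the tracial (Kac) setting, and likewise on the dual side the rank of $s(\widehat{f}(\alpha)^*\widehat{f}(\alpha))$ equals $\mathrm{rank}(\widehat{f}(\alpha))$, so your conclusion stands.
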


\begin{proof}
Since the Haar state $\varphi$ is tracial, we have
\[\left \{ \begin{array}{ll}\left \|f\right \|_{L^p(\g)}=\left \|s(f) \cdot f \right \|_{L^p(\g)}\leq \left \|s(f)\right\|_{\frac{2p}{2-p}}\left \| f\right\|_2\\
\left \|\widehat{f}\right \|_{L^{2}(\g)}= \left \|s(\widehat{f})\cdot \widehat{f}\right \|_{L^{2}(\g)} \leq \left \|s(\widehat{f})\right \|_{\frac{2p}{2-p}}\left \|\widehat{f}\right\|_{p'} \end{array} \right . \]
for all $1\leq p<2$ by the H$\ddot{\mathrm{o}}$lder inequality. Therefore,
\[\varphi(s(f))\geq (\frac{\left \|f\right \|_p}{\left \|f\right\|_2})^{\frac{2p}{2-p}}~\mathrm{and~}\widehat{\varphi}(s(\widehat{f}))\geq (\frac{\left \|\widehat{f}\right\|_2}{\left \|\widehat{f}\right \|_{p'}})^{\frac{2p}{2-p}},\]
so that
\begin{align*}
\log(\varphi(s(\widehat{f}))\widehat{\varphi}(s(\widehat{f}))) \geq \frac{2p}{2-p}\log(\frac{\left \|f \right\|_p}{\left \|\widehat{f}\right\|_{p'}})
\end{align*}
for any $1\leq p<2$. By taking limit as $p\nearrow 2$ and Lemma \ref{lem2} (1), we obtain

\begin{align*}
\log(\varphi(s(f))\sum_{\alpha\in \mathrm{Irr}(\g)}n_{\alpha}\mathrm{rank}(\widehat{f}(\alpha)))&\geq H(\left |f\right |^2,\varphi_{\g})+H(\left |\widehat{f}\right |^2,\widehat{\varphi}_{\widehat{\g}}).
\end{align*}

\end{proof}

\subsection{Examples of compact quantum groups}

\subsubsection{Duals of discrete groups}

Let $\Gamma$ be a discrete group and consider the left translation unitary operators $\lambda_g\in B(\ell^2(\Gamma))$ with $g\in \Gamma$ defined by
\[(\lambda_g \xi)(x)=\xi(g^{-1}x)\]
for all $\xi\in \ell^2(\Gamma)$ and $x\in \Gamma$. It is easy to check $\lambda_g\lambda_h=\lambda_{gh}$ and $\lambda_g^*=\lambda_{g^{-1}}$. The group von Neumann algebra $VN(\Gamma)$ is defined as the closure of a space $\mathrm{span}\left \{\lambda_g:g\in \Gamma\right\}$ in $B(\ell^2(\Gamma))$ with respect to the strong operator topology.

The map $\lambda_g\mapsto \lambda_g\otimes \lambda_g$ uniquely extends to a normal $*$-homomorphism 
\[\Delta:VN(\Gamma)\rightarrow VN(\Gamma)\overline{\otimes} VN(\Gamma).\]

Together with $\Delta$ and the vacuum state $\tau = \la \cdot \delta_e,\delta_e\ra_{\ell^2(\Gamma)}$, $\widehat{\Gamma}=(VN(\Gamma),\Delta,\tau)$ becomes a compact quantum group, which is called {\it the dual of the discrete group $\Gamma$}. In this case, the Haar state $\tau$ of $\widehat{\Gamma}$ is the tracial, $\mathrm{Irr}(\widehat{\Gamma})\cong \Gamma$ and 
\[\mathrm{Pol}(\widehat{\Gamma})=\bigoplus_{g\in \Gamma}\Comp\cdot \lambda_g .\]

\subsubsection{Free orthogonal quantum groups}

Let $N\geq 2$ and $A$ be the universal unital $*$-algebra generated by $\left \{u_{i,j}:1\leq i,j\leq N\right\}$, satisfying the relations
\[u_{i,j}^*=u_{i,j}~\mathrm{and~}\sum_{k=1}^N u_{i,k}u_{j,k}^*=\delta_{i,j}1 = \sum_{k=1}^N u_{k,i}^* u_{k,j}\]
for all $1\leq i,j\leq N$. Then there exists a unital $*$-homomorphism $\Delta_0:A\rightarrow A\otimes A$, $u_{i,j}\mapsto \displaystyle \sum_{k=1}^N u_{i,k}\otimes u_{k,j}$, and a unital faithful positive linear functional $\varphi_0$ on $A$ such that $(\mathrm{id}\otimes \varphi_0)(\Delta_0 (a))=\varphi_0(a)1=(\varphi_0\otimes \mathrm{id})(\Delta_0 (a))$ for all $a\in A$.

The von Neumann algebra $L^{\infty}(O_N^+)$ is defined as the weak$*$-closure of the GNS image of $A$ with respect to the state $\varphi_0$ and then $\Delta_0$ extends to a normal $*$-homomorphism $\Delta_{O_N^+}:L^{\infty}(O_N^+)\rightarrow L^{\infty}(O_N^+)\overline{\otimes}L^{\infty}(O_N^+)$. Also, $\varphi_0$ extends to the Haar state $\varphi_{O_N^+}$ on $L^{\infty}(O_N^+)$.

Together with these structure maps, $O_N^+=(L^{\infty}(O_N^+), \Delta_{O_N^+},\varphi_{O_N^+})$ is a compact quantum group, which is called {\it the free orthogonal quantum group}. In this case, the Haar state is tracial, $\mathrm{Irr}(O_N^+)$ is identified with $\left \{0\right\}\cup \n$ and $n_k\sim r_0^k$ where $r_0=\displaystyle \frac{N+\sqrt{N^2-4}}{2}$ \cite{BaVe09}.

\subsubsection{Quantum $SU(2)$ groups}\label{subsub1}

Let $0<q<1$ and $A$ be the universal unital $C^*$-algebra generated by $a$ and $c$ such that $  \left ( \begin{array}{cc} a&-qc^*\\ c&a^* \end{array}\right )$ is unitary. Then there exists a $*$-homomorphism $\Delta_0:A\rightarrow A\otimes_{min}A, \begin{array}{c}a\mapsto a\otimes a-qc^*\otimes c \\ c\mapsto c\otimes a+a^*\otimes c \end{array}$, and a faithful state $\varphi_0$ on $A$ such that
\[(\mathrm{id}\otimes \varphi_0)(\Delta_0(x))=\varphi_0(x)1=(\varphi_0\otimes \mathrm{id})(\Delta_0(x))\]
for all $x\in A$. Then we define the von Neumann algebra $L^{\infty}(SU_q(2))$ as the weak $*$-closure of GNS image of $A$ with respect to $\varphi_0$.

The maps $\Delta_0$ and $\varphi_0$ uniquely extend to a normal $*$-homomorphism $\Delta_{SU_q(2)}:L^{\infty}(SU_q(2))\rightarrow L^{\infty}(SU_q(2))\overline{\otimes} L^{\infty}(SU_q(2))$ and the unique normal faithful state $\varphi_{SU_q(2)}$ on $L^{\infty}(SU_q(2))$ respectively, which determines {\it the quantum $SU(2)$ group} $SU_q(2)=(L^{\infty}(SU_q(2)),\Delta_{SU_q(2)},\varphi_{SU_q(2)})$. It is known that the Haar state $\varphi_{SU_q(2)}$ is non-tracial, $\mathrm{Irr}(SU_q(2))=\left \{u^n=(u^n_{i,j})_{0\leq i,j\leq n}\in M_{n+1}(L^{\infty}(SU_q(2))):n\in \left \{0\right\}\cup \n \right \}\cong \left \{0\right\}\cup \n$ 
and 
\[Q_n=\left ( \begin{array}{cccc} q^{-n}&\cdots &0&0\\ \vdots &\ddots&\vdots&\vdots\\ 0&\cdots &q^{n-2}&0\\ 0&\cdots &0&q^n  \end{array}\right ) \mathrm{~for~all~}n\in \n \]
with respect to a canonical choice of an orthonormal basis \cite{Ko89}.

\section{The main results}

\subsection{Under the localization on the duals of Free orthogonal quantum groups}

The aim of this section is to demonstrate that the entropic uncertainty relations are drastically sharpened
\[e^{H(\left |f\right |^2,\varphi_{O_N^+})+H(\left |\widehat{f}\right |^2,\widehat{\varphi}_{\widehat{O_N^+}})}>>1\]
if we impose some mild restrictions on the support of $\widehat{f}$ (Theorem \ref{thm2}). The main ingredients are the rapid decay property and the exponential growth of the dual $\widehat{O_N^+}$ of the free orthogonal quantum groups with $N\geq 3$. 

\begin{lemma}([Theorem 4.9, \cite{Ve07}] and [Lemma 3.1, \cite{Br14}]) \label{lem4}

There exists a universal constant $C\leq 2$, which is independent of $N$ and $k$, such that
\begin{equation}
\left \|  f \right\|_{L^{\infty}(O_N^+)}\leq C(1+k)\left \| f\right\|_{L^{2}(O_N^+)}
\end{equation}
for all $N\geq 3$, $k\in \n$ and $f\in \mathrm{Pol}_k(O_N^+)$.
\end{lemma}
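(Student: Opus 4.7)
The plan is to follow the strategy of Vergnioux and Brannan and prove a property RD-type inequality for the compact side, exploiting the fact that $O_N^+$ is of Kac type so that $\|f\|_{L^\infty(O_N^+)}$ coincides with the operator norm of the left multiplication map $L_f:L^2(O_N^+)\to L^2(O_N^+)$, $g\mapsto fg$. Consequently it suffices to control $\|L_f\|_{L^2\to L^2}$ by $(1+k)\|f\|_{L^2}$ for $f\in\mathrm{Pol}_k(O_N^+)$.

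The key structural input is the fusion rule
\[u^k\otimes u^l\;\cong\; u^{|k-l|}\oplus u^{|k-l|+2}\oplus\cdots\oplus u^{k+l},\]
which produces exactly $\min(k,l)+1\leq k+1$ irreducible summands. Accordingly, for each $l\geq 0$ and each $f\in\mathrm{Pol}_k$, the image $f\cdot\mathrm{Pol}_l$ is contained in $\mathrm{Pol}_{|k-l|}\oplus\mathrm{Pol}_{|k-l|+2}\oplus\cdots\oplus\mathrm{Pol}_{k+l}$. I would therefore decompose
\[L_f\;=\;\sum_{j=0}^{k}L_f^{(j)},\qquad L_f^{(j)}(g)\;=\;P_{l+k-2j}(fg)\quad\text{for }g\in\mathrm{Pol}_l,\]
where $P_m$ is the orthogonal projection of $L^2(O_N^+)$ onto $\mathrm{Pol}_m$ (with the convention that $L_f^{(j)}$ vanishes when $l+k-2j<0$). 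Since the triangle inequality gives $\|L_f\|\leq\sum_{j=0}^{k}\|L_f^{(j)}\|$, the lemma reduces to the uniform bound
\[\|L_f^{(j)}\|_{L^2\to L^2}\;\leq\;C'\,\|f\|_{L^2(O_N^+)}\qquad(0\leq j\leq k),\]
with $C'$ independent of $N\geq 3$, $k$, $j$.

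The heart of the argument is precisely this single-block estimate. I would prove it by computing $(L_f^{(j)})^*L_f^{(j)}$ on each $\mathrm{Pol}_l$ using the Peter-Weyl orthogonality (\ref{eq3}) together with the Clebsch-Gordan intertwiners coming from the Temperley-Lieb description of $\mathrm{Hom}(u^{l+k-2j},u^k\otimes u^l)$. After diagonalization this reduces the operator norm to a ratio of quantum dimensions
\[\frac{n_k\,n_l}{n_{l+k-2j}},\]
which, thanks to the geometric behavior $n_k\sim r_0^k$ with $r_0=(N+\sqrt{N^2-4})/2\geq(3+\sqrt{5})/2$, stays bounded by an absolute constant uniformly in $j,l,N$, provided $N\geq 3$. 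The Plancherel identity $\|f\|_{L^2}^2=n_k\|\widehat{f}(k)\|_{HS}^2$ then converts the resulting estimate into $\|L_f^{(j)}\|\leq C'\|f\|_{L^2}$.

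The main obstacle is the single-block uniform estimate in the previous paragraph: one must extract a constant that does not degenerate as $j\to k$ (where $l+k-2j$ can be small and the Temperley-Lieb Jones-Wenzl projection becomes delicate) nor as $N\to\infty$. This is exactly what is handled in \cite{Ve07} and made explicit with $C\leq 2$ in \cite{Br14}; I would import their computation verbatim rather than redo it.
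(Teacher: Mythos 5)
Your skeleton is the right one: identifying $\left\|f\right\|_{L^{\infty}(O_N^+)}$ with the norm of left multiplication on $L^2$, splitting $L_f$ into $k+1$ blocks via the fusion rules $u^k\otimes u^l\cong u^{|k-l|}\oplus\cdots\oplus u^{k+l}$, and summing the blocks is exactly the Vergnioux--Brannan rapid decay argument that the paper (and the lemma's citation) rests on. But two things go wrong. First, your reduction of the single-block norm to the ratio $\frac{n_k n_l}{n_{l+k-2j}}$, claimed to be uniformly bounded, is false: since $n_m\sim r_0^{m}$, this ratio behaves like $r_0^{2j+1}/(r_0-r_0^{-1})$ and blows up as $j\to k$. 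The quantity that is actually uniformly controlled is of the form $n_a n_b/n_{a+b}$ (equivalently $n_{k-j}n_{l-j}/n_{k+l-2j}$), combined with norm estimates for the Clebsch--Gordan isometries; that this does not degenerate as $j\to k$ is precisely the delicate point of [Ve07]/[Br14], and as literally stated your "heart of the argument" would fail.

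Second, and more importantly for this particular lemma: its content is not the rapid decay inequality itself (which both you and the paper simply cite) but the assertion that the constant can be taken $\leq 2$ \emph{uniformly in $N\geq 3$}. That is the only thing the paper actually proves here. It takes Brannan's explicit constant $\frac{1}{1-r_0^{-2}}\left(\prod_{s\geq 1}(1-r_0^{-2s})^{-1}\right)^3$, bounds the infinite product by $\exp\left(\sum_{s\geq 1}\frac{r_0^{-2s}}{1-r_0^{-2s}}\right)$, and uses $r_0-r_0^{-1}=\sqrt{N^2-4}\geq\sqrt{5}$ to land at $\frac{9}{5}e^{1/125}\leq 2$. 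Your proposal instead asserts that $C\leq 2$ is "made explicit in [Br14]"; it is not --- [Br14] only yields the $N$-dependent expression above, and the numerical verification that it is at most $2$ for every $N\geq 3$ is the step you would have to supply. As written, the proposal omits the entire proof that the paper gives.
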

\begin{proof}
The proofs of [Corollary 2.3, \cite{Br14}] and [Lemma 3.1, \cite{Br14}] show that the constant can be chosen to satisfy
\[C \leq \frac{1}{1-r_0^{-2}}(\prod_{s=1}^{\infty} \frac{1}{1-r_0^{-2s}})^3.\]

Since $\displaystyle \prod_{s=1}^{\infty}\frac{1}{1-r_0^{-2s}}=\prod_{s=1}^{\infty}(1+\frac{r_0^{-2s}}{1-r_0^{-2s}})\leq e^{\sum_{s=1}^{\infty} \frac{r_0^{-2s}}{1-r_0^{-2s}} }\leq e^{\frac{1}{1-r_0^{-2}}\sum_{s=1}^{\infty}r_0^{-2s}}$, 
\begin{align*}
C&\leq \frac{1}{1-r_0^{-2}}e^{(\frac{1}{r_0-r_0^{-1}})^6}=\frac{1}{1-r_0^{-2}}e^{\frac{1}{(N^2-4)^3}}\leq \frac{1}{1-\frac{4}{N^2}} e^{\frac{1}{125}}\leq \frac{9}{5} e^{\frac{1}{125}}\leq 2.
\end{align*}
\end{proof}

\begin{theorem}\label{thm2}
Let $N\geq 3$, $t\in \n$ and $f\in \mathrm{Pol}_{\mathrm{Irr}(O_N^+)\setminus \left \{0,1,\cdots, t-1\right\}}(O_N^+)$ with $\left \|f\right\|_{L^2(O_N^+)}=1$. Then we have
\begin{equation}
e^{H(\left |f\right |^2,\varphi_{O_N^+})+H(\left |\widehat{f}\right |^2,\widehat{\varphi}_{\widehat{O_N^+}})}\geq \inf_{ k \geq t} \frac{n_{k}}{C^2(1+k)^2}.
\end{equation}
Here, the constant $C$ comes from Lemma $\ref{lem4}$.
\end{theorem}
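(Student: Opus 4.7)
The plan is to apply Corollary \ref{cor0} at the endpoint $p=1$, which converts the entropic uncertainty into a Hausdorff--Young--type inequality, and then to prove that inequality by dualizing Lemma \ref{lem4} so as to control every Fourier projection of an arbitrary $f$.

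Combining the convention of \eqref{eq7} with Corollary \ref{cor0} specialized to $p=1$ immediately yields
\begin{equation*}
H(|f|^{2},\varphi_{O_N^+})+H(|\widehat{f}|^{2},\widehat{\varphi}_{\widehat{O_N^+}})\ \geq\ 2\log\!\left(\|f\|_{L^1(O_N^+)}\big/\|\widehat{f}\|_{\ell^\infty(\widehat{O_N^+})}\right).
\end{equation*}
Hence it will suffice to prove the local Hausdorff--Young--type bound
\begin{equation*}
\|\widehat{f}\|_{\ell^\infty(\widehat{O_N^+})}\ \leq\ \|f\|_{L^1(O_N^+)}\cdot \sup_{k\geq t}\frac{C(1+k)}{\sqrt{n_k}},
\end{equation*}
after which exponentiation gives precisely the stated estimate.

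To prove the local Hausdorff--Young inequality I would fix $k\geq t$ and write $f_k$ for the $L^2$-orthogonal projection of $f$ onto $\mathrm{Pol}_k(O_N^+)$, so that $\widehat{f}(k)=\widehat{f_k}(k)$. The operator norm is dominated by the Hilbert--Schmidt norm, while Plancherel (via Schur's orthogonality \eqref{eq3}) identifies the latter as $\|\widehat{f}(k)\|_{HS}=\|f_k\|_{L^2}/\sqrt{n_k}$. Thus the problem reduces to estimating $\|f_k\|_{L^2}$. The key step is to exploit $L^2$--duality inside the finite--dimensional space $\mathrm{Pol}_k$ together with the orthogonality of distinct spectral subspaces, yielding
\[
\|f_k\|_{L^2}\ =\ \sup\bigl\{\,|\varphi_{O_N^+}(g^*f)|\,:\ g\in \mathrm{Pol}_k(O_N^+),\ \|g\|_{L^2}=1\bigr\},
\]
and then to estimate each such inner product by H\"older's inequality, $|\varphi_{O_N^+}(g^*f)|\leq \|g\|_\infty\|f\|_{L^1}$, where Lemma \ref{lem4} supplies the bound $\|g\|_\infty\leq C(1+k)$. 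Assembling these three inequalities gives $\|\widehat{f}(k)\|_{\mathrm{op}}\leq C(1+k)\|f\|_{L^1}/\sqrt{n_k}$ for every $k\geq t$; taking the supremum (all other indices contribute zero to $\widehat{f}$) completes the proof.

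The only non-routine ingredient is the duality step that lifts Lemma \ref{lem4}---a statement about \emph{single-mode} polynomials---into a uniform $L^1\!\to L^2$ control of every Fourier projection $P_k$ acting on an arbitrary $f\in \mathrm{Pol}(O_N^+)$. Everything else is a direct application of Plancherel, H\"older, and Corollary \ref{cor0}, which explains why the bound in the theorem comes out as $\inf_{k\geq t}n_k/(C^2(1+k)^2)$: the factor $n_k$ is produced by Plancherel on the block of size $n_k$, whereas the factor $C^2(1+k)^2$ reflects the quantitative rapid--decay constant of Lemma \ref{lem4}.
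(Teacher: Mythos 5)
Your proof is correct and follows essentially the same route as the paper: the endpoint bound $\|\widehat f(k)\|\le \|\widehat f(k)\|_{HS}\le C(1+k)\,n_k^{-1/2}\|f\|_{L^1(O_N^+)}$ that you rederive from Lemma \ref{lem4} via Plancherel, $L^2$-duality on $\mathrm{Pol}_k$ and H\"older is exactly the dual rapid-decay estimate the paper quotes from [Proposition 3.7, Yo18], and where the paper then interpolates the truncated Fourier transform $p_t:L^p\to\ell^{p'}_{E_t}$ for $1<p<2$ and lets $p\nearrow 2$, you package that same interpolation-plus-limit step into a single application of Corollary \ref{cor0} at the endpoint $p=1$. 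Both steps are valid, so the argument stands as written.
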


\begin{proof}
From the assumption and [Proposition 3.7, \cite{Yo18}], we have
\[\sup_{k\geq 0}\frac{n_k^{\frac{1}{2}}\left \|\widehat{f}(k)\right \|_{HS}}{C(k+1)}\leq \left \|f\right\|_{L^1(O_N^+)}\]
for all $f\in L^1(O_N^+)$. This implies
\[\left \|\widehat{f}(k) \right \| \leq \left \|\widehat{f}(k)\right \|_{HS} \leq \frac{C(k+1)}{\sqrt{n_{k}}}\left \|f\right \|_{L^1(O_N^+)}\]
for all $k$ and $f\in L^1(O_N^+)$.

Now, for any $t\in \n$, let us set $E_t=\left \{t,t+1,\cdots \right\}$ and consider linear maps
\[p_t:L^1(O_N^+)\rightarrow \ell^{\infty}_{E_t}(\widehat{O_N^+}), f\mapsto (\widehat{f}( k ))_{ k \geq t},\]
where $\ell^{q}_{E_t}(\widehat{O_N^+})=\left \{A=(A(\alpha))_{\alpha\in \mathrm{Irr}(\g)}\in \ell^q(\widehat{O_N^+}):\mathrm{supp}(A)\subseteq E_t \right\}$ for all $1\leq q\leq \infty$. Then $p_t$ satisfies 
\[ \left \|p_t \right\|_{L^1(O_N^+)\rightarrow \ell^{\infty}_{E_t}(\widehat{O_N^+})}\leq \sup_{k \geq t} \frac{C(1+k)}{\sqrt{n_{k}}}~\mathrm{and~} \left \|p_t\right\|_{L^2(O_N^+)\rightarrow \ell^2_{E_t}(\widehat{O_N^+})}\leq 1 .\]

By Theorem \ref{lem3}, for any $1<p<2$, we obtain
\[\left \| p_t \right \|_{L^p(O_N^+)\rightarrow \ell^{p'}_{E_t}(\widehat{O_N^+})}\leq (\sup_{ k \geq t}\frac{C(1+k)}{n_{ k }^{\frac{1}{2}}})^{\frac{2}{p}-1}.\]

Note that $p_t$ is nothing but the Fourier transform on $\mathrm{Pol}_{E_t}(O_N^+)$. Hence for any $1<p<2$ and $f\in \mathrm{Pol}_{E_t}(O_N^+)$,
\begin{align*}
\frac{2p}{2-p}\log (\frac{\left \|f\right\|_{L^p(O_N^+)}}{\left \|\widehat{f}\right \|_{\ell^{p'}(\widehat{O_N^+})}})&\geq \frac{2p}{2-p}\frac{p-2}{p}\log(\sup_{k\geq t} \frac{C(1+k)}{n_k^{\frac{1}{2}}})\\
&=-2 \log (\sup_{ k \geq t}\frac{C(1+k)}{n_{k}^{\frac{1}{2}}})=2 \inf_{ k \geq t}\log (\frac{n_{ k }^{\frac{1}{2}}}{C(1+ k)}).
\end{align*}

Therefore, by taking limit as $p\nearrow 2$, we have
\begin{align*}
e^{H(\left |f\right |^2,\varphi_{O_N^+})+H(\left |\widehat{f}\right |^2,\widehat{\varphi}_{\widehat{O_N^+}})}&\geq \frac{n_k}{C^2(1+k)^2}.
\end{align*}

\end{proof}

\begin{corollary}\label{cor1}
For the free orthogonal quantum group $O_N^+$ with $N\geq 3$, the following holds.
\begin{enumerate}
\item Under very mild support condition $E=\mathrm{Irr}(O_N^+)\setminus \left \{0\right\}$, we have
\begin{align}\label{ineq5}
\varphi_{O_N^+}(s(f)) \sum_{\alpha\in \mathrm{Irr}(\g)}n_{\alpha}\mathrm{rank}(\widehat{f}(\alpha))&\geq e^{H(\left |f\right |^2,\varphi_{O_N^+})+H(\left |\widehat{f}\right |^2,\widehat{\varphi}_{\widehat{O_N^+}}) } \gtrsim N 
\end{align}
for all $N\in \n$ and $f\in \mathrm{Pol}_{\mathrm{Irr}(O_N^+)\setminus \left \{0\right\}}(O_N^+)$ with $\left \|f\right\|_{L^2(O_N^+)}=1$.

\item  Let $N\geq 3$ and $r<r_0=\displaystyle \frac{N+\sqrt{N^2-4}}{2}$ be fixed. Then we have
\begin{align}
\varphi_{O_N^+}(s(f)) \sum_{\alpha\in \mathrm{Irr}(\g)}n_{\alpha}\mathrm{rank}(\widehat{f}(\alpha))&\geq e^{H(\left |f\right |^2,\varphi_{O_N^+})+H(\left |\widehat{f}\right |^2,\widehat{\varphi}_{\widehat{O_N^+}}) } \gtrsim r^{t}
\end{align}
for any $t\in \n$ and $f\in \mathrm{Pol}_{\mathrm{Irr}(O_N^+)\setminus \left \{0,1,\cdots, t-1\right\}}(O_N^+)$ with $\left \|f\right\|_{L^2(O_N^+)}=1$.

\end{enumerate}
\end{corollary}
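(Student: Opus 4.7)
The plan is to combine three ingredients already at hand: Proposition \ref{prop1} to obtain the leftmost Donoho--Stark inequality, Theorem \ref{thm2} to bound the entropic quantity from below by an infimum over quantum dimensions, and the explicit growth of $n_k$ for $\widehat{O_N^+}$ to control that infimum. Since $O_N^+$ is of Kac type, Proposition \ref{prop1} immediately delivers the leftmost inequality
\[
\varphi_{O_N^+}(s(f))\sum_{\alpha \in \mathrm{Irr}(O_N^+)} n_{\alpha}\mathrm{rank}(\widehat{f}(\alpha)) \geq e^{H(\left|f\right|^2,\varphi_{O_N^+})+H(\left|\widehat{f}\right|^2,\widehat{\varphi}_{\widehat{O_N^+}})}
\]
in both (1) and (2), so all the work lies in the lower bound on the entropic quantity.

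For part (1), I would apply Theorem \ref{thm2} at $t=1$ and reduce the claim to
\[
\inf_{k \geq 1}\frac{n_k}{C^2(1+k)^2} \gtrsim N.
\]
The term at $k=1$ is exactly $\frac{N}{4C^2}$, which is already $\gtrsim N$ since $C$ is a universal constant, so it suffices to verify that this is, up to a universal factor, the minimum. Using the closed form $n_k = \frac{r_0^{k+1}-r_0^{-(k+1)}}{r_0 - r_0^{-1}}$ with $r_0 = \frac{N+\sqrt{N^2-4}}{2} \geq N-1 \geq 2$ for $N \geq 3$, I would derive $n_k \gtrsim r_0^k$ and then confirm $\frac{n_k}{(1+k)^2} \geq \frac{N}{4}$ for all $k \geq 1$ by a short induction together with direct checks at $k = 2,3$. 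This small-$k$ case analysis is the main obstacle, since the polynomial factor $(1+k)^2$ is of the same order of magnitude as $n_k$ for small $N$ and $k$, so it cannot be discarded by a cheap asymptotic bound.

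For part (2), Theorem \ref{thm2} again reduces the problem to $\displaystyle \inf_{k \geq t}\frac{n_k}{C^2(1+k)^2}$. Fix $r<r_0$. Using $n_k \gtrsim r_0^k$, the quotient $\frac{n_k}{r^k(1+k)^2}$ is of order $\frac{(r_0/r)^k}{(1+k)^2}$, which is bounded below by a strictly positive constant $c(r) > 0$ uniformly in $k \geq 0$: trivially if $r \leq 1$ (the numerator stays bounded below by $n_0/C^2$ and the factor $r^k$ is $\leq 1$), and by the fact that an exponential beats a polynomial when $1 < r < r_0$. Consequently $\frac{n_k}{C^2(1+k)^2} \gtrsim r^k$ for every $k \geq 0$, and taking the infimum over $k \geq t$ yields the desired bound $\gtrsim r^t$.
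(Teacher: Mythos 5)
Your argument is correct and follows essentially the same route as the paper's: Proposition \ref{prop1} for the left-hand inequality, Theorem \ref{thm2} to reduce everything to $\inf_{k\geq t}\frac{n_k}{C^2(1+k)^2}$, and then monotonicity of $\frac{n_k}{(1+k)^2}$ (the paper checks the ratio $\frac{n_{k+1}}{n_k}\cdot(\frac{k+1}{k+2})^2\geq r_0\cdot\frac{4}{9}>1$, which is exactly your induction) together with $n_k\geq r_0^k$. The one small slip is in the last step of (2): from the termwise bound $\frac{n_k}{C^2(1+k)^2}\geq c(r)r^k$ one may conclude $\inf_{k\geq t}\frac{n_k}{C^2(1+k)^2}\geq c(r)r^t$ only when $r\geq 1$, since for $0<r<1$ one has $\inf_{k\geq t}r^k=0$; for such $r$ the claim is instead trivial because the quantity is bounded below by a constant while $r^t\leq 1$ --- an observation you already make, just placed one step too early. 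The paper avoids this by evaluating $\inf_{k\geq t}\frac{r_0^k}{4(1+k)^2}=\frac{r_0^t}{4(1+t)^2}$ first and only then comparing with $r^t$.
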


\begin{proof}
\begin{enumerate}
\item First of all, the sequence $(\frac{n_k}{(k+1)^2})_{k\geq 1}$ is increasing since
\[\frac{\frac{n_{k+1}}{(k+2)^2}}{\frac{n_k}{(k+1)^2}}=\frac{n_{k+1}}{n_k}\cdot (\frac{k+1}{k+2})^2\geq \frac{N+\sqrt{N^2-4}}{2}\cdot (\frac{k+1}{k+2})^2\geq \frac{3+\sqrt{5}}{2}\cdot \frac{4}{9}>1.\]

Thus, the conclusion follows from Theorem \ref{thm2}, Proposition \ref{prop1} and $\displaystyle \frac{n_1}{4C^2}=\frac{N}{4C^2} \geq \frac{N}{8}$.
\item Again, it is enough to use Theorem \ref{thm2}, Proposition \ref{prop1} and the fact that
\[ \inf_{k\geq t}\frac{n_k}{C^2(1+k)^2}\geq \inf_{k\geq t}\frac{r_0^k}{4 (1+k)^2}=\frac{r_0^t}{4(1+t)^2}\gtrsim r^t~\mathrm{for~all~}t\in \n.\]
\end{enumerate}
\end{proof}

\begin{remark}
In fact, the proof of Theorem \ref{thm2} is still valid for other various free quantum groups. A similar conclusion holds for 
\begin{itemize}
\item the free unitary quantum groups $U_N^+$ with $N\geq 3$ and
\item the quantum automorphism groups $\g_{aut}(B,\psi)$ with a $\delta$-trace $\psi$ and a $C^*$-algebra $B$ such that $\mathrm{dim}(B)\geq 5$,
\end{itemize}
if we choose $E_t=\left \{\alpha\in \mathrm{Irr}(\g):\left |\alpha\right |\geq t\right\}$ where $\left |\cdot  \right |$ denotes the natural length function on $\mathrm{Irr}(\g)$.
\end{remark}

\subsection{When $\widehat{f}$ is localized on a lacunary set}

One of the main observations we obtained is that studies on {\it lacunarity} can be used to estimate  uncertainty relations between $f$ and $\widehat{f}$. In particular, we will explore how the existence of infinite $\Lambda(p)$ sets affect the uncertainty relations. It is known that all duals of discrete groups, which include all of the abelian compat groups, admits an infinite $\Lambda(p)$-set [Theorem A.1., \cite{Wa17}]. For another link between lacunarity and uncertainty principles, refer to \cite{NaSi11}.

\begin{theorem}\label{thm3}
If $E\subseteq \mathrm{Irr}(\g)$ is a $\Lambda(p)$-set with a constant $K>0$ and $p>2$, then 
\begin{equation} 
e^{\lim_{q\rightarrow 2}\frac{2q}{2-q}\log(\frac{\left \|f\right \|_q}{\left \|\widehat{f}\right \|_{q'}})}\geq \frac{1}{K^{\frac{2p}{p-2}}\left \|\widehat{f}\right\|_{\infty}^2}  
\end{equation}
for all $f\in \displaystyle \mathrm{Pol}_E(\g)$ with $\left \|f\right\|_2=1$. In particular, if $\g$ is of Kac type, then
\begin{equation}
\varphi_{\g}(s(f)) \sum_{\alpha\in \mathrm{Irr}(\g)}n_{\alpha}\mathrm{rank}(\widehat{f}(\alpha))\geq e^{H(\left |f\right |^2,\varphi_{\g})+H(\left |\widehat{f}\right |^2,\widehat{\varphi}_{\widehat{\g}})}\gtrsim \frac{1}{\left \|\widehat{f}\right\|_{\infty}^2} \geq \min_{\alpha\in \mathrm{supp}(\widehat{f})}n_{\alpha} 
\end{equation}
for all $f\in \mathrm{Pol}_E(\g)$ with $\left \|f\right \|_2=1$.
\end{theorem}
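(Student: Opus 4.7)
The plan is to invoke Corollary \ref{cor0} at the exponent $p>2$ supplied by the $\Lambda(p)$-hypothesis: this reduces the limit at $q=2$ to a single Hausdorff--Young-type quotient at $p$, which I then estimate from the two sides. The upper bound $\|f\|_p\leq K\|f\|_2=K$ is exactly the $\Lambda(p)$-condition on the compact side, and the matching lower bound on $\|\widehat{f}\|_{p'}$ comes from Proposition \ref{prop2}(2) at the conjugate exponent $p'\in(1,2)$. The algebraic identity $\frac{2-p'}{p'}=\frac{p-2}{p}$ will produce the cancellation that yields the constant $K^{-2p/(p-2)}\|\widehat{f}\|_\infty^{-2}$.

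In detail, Corollary \ref{cor0} at $p>2$, together with the negativity of $\frac{2p}{2-p}$, gives
\[\lim_{q\to 2}\frac{2q}{2-q}\log\!\left(\frac{\|f\|_q}{\|\widehat{f}\|_{q'}}\right)\;\geq\;\frac{2p}{p-2}\Bigl[\log\|\widehat{f}\|_{p'}-\log\|f\|_p\Bigr].\]
The $\Lambda(p)$-bound converts $-\log\|f\|_p$ into $-\log K$, contributing $-\frac{2p}{p-2}\log K$. On the Fourier side, Proposition \ref{prop2}(2) applied to $\widehat{f}$ (which satisfies $\|\widehat{f}\|_2=1$) at exponent $p'$ yields $\|\widehat{f}\|_{p'}^{p'}\geq\|\widehat{f}\|_\infty^{p'-2}$, hence
\[\frac{2p}{p-2}\log\|\widehat{f}\|_{p'}\;\geq\;\frac{2p}{p-2}\cdot\frac{p'-2}{p'}\log\|\widehat{f}\|_\infty\;=\;-2\log\|\widehat{f}\|_\infty,\]
after using $\frac{2-p'}{p'}=\frac{p-2}{p}$. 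Adding the two contributions and exponentiating deliver the first assertion.

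For the Kac type case, the Remark following Proposition \ref{prop2} (a rephrasement of Lemma \ref{lem2}(1)) identifies $e^{\lim_{q\to 2}\frac{2q}{2-q}\log(\|f\|_q/\|\widehat{f}\|_{q'})}$ with $e^{H(|f|^2,\varphi_\g)+H(|\widehat{f}|^2,\widehat{\varphi}_{\widehat{\g}})}$, so the first part already gives the $\gtrsim 1/\|\widehat{f}\|_\infty^2$ estimate. The outer Donoho--Stark-type inequality $\varphi_\g(s(f))\sum_\alpha n_\alpha\mathrm{rank}(\widehat{f}(\alpha))\geq e^{H+H}$ is Proposition \ref{prop1}, and the trailing bound $1/\|\widehat{f}\|_\infty^2\geq\min_{\alpha\in\mathrm{supp}(\widehat{f})}n_\alpha$ is a one-line consequence of Plancherel: $1=\|\widehat{f}\|_2^2=\sum_\alpha n_\alpha\|\widehat{f}(\alpha)\|_{HS}^2$ forces $\|\widehat{f}(\alpha)\|^2\leq\|\widehat{f}(\alpha)\|_{HS}^2\leq 1/n_\alpha$ for every $\alpha$ in the support. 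The proof is essentially bookkeeping once the ingredients are in place; the only subtlety is to track the sign of $\frac{2p}{2-p}$ when $p>2$ and to verify the clean cancellation $\frac{2p}{p-2}\cdot\frac{p-2}{p}=2$, which is exactly what engineers the constant in the stated bound.
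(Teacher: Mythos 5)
Your argument is correct and is essentially the paper's own proof: both rest on the $\Lambda(p)$ bound $\left \|f\right\|_p\leq K$, complex interpolation, and Proposition \ref{prop2} (2) applied on the dual side, with the identical cancellation $\frac{2p}{p-2}\cdot\frac{p-2}{p}=2$ producing the constant $K^{-2p/(p-2)}\left \|\widehat{f}\right\|_{\infty}^{-2}$. The only cosmetic difference is that you pass directly to the endpoint exponent $p$ via Corollary \ref{cor0}, whereas the paper interpolates the $\Lambda(p)$ constant to each $q\in(2,p)$ and notes that the resulting lower bound is independent of $q$ before letting $q\to 2$; your treatment of the Kac-type tail (Proposition \ref{prop1} together with $\left \|\widehat{f}(\alpha)\right \|\leq \left \|\widehat{f}(\alpha)\right\|_{HS}\leq n_{\alpha}^{-1/2}$) matches the paper's.
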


\begin{proof}
By Theorem \ref{lem3} and Proposition \ref{prop2}, for all $2<q<p$ and $f\in \mathrm{Pol}_E(\g)$ with $\left \|f\right\|_2=1$, we have
\[\left \|f\right\|_q \leq K^{\frac{p(q-2)}{q(p-2)}}\left \|f\right\|_2 \]
and
\begin{align*}
\frac{2q}{q-2}\log (\frac{\left \|\widehat{f}\right \|_{q'}}{\left \|f\right \|_q})& = \frac{2q}{q-2}\log (\frac{(\sum_{\alpha\in \mathrm{supp}(\widehat{f})}d_{\alpha}\left \|\widehat{f}(\alpha)Q_{\alpha}^{\frac{1}{q'}}\right\|_{S^{q'}_{n_{\alpha}}}^{q'})^{\frac{1}{q'}}}{\left \|f\right \|_q}) \\
&\geq \frac{2q}{q-2}\log(\frac{(\sum_{\alpha\in \mathrm{supp}(\widehat{f}) }d_{\alpha}\frac{\left \|\widehat{f}(\alpha)Q_{\alpha}^{\frac{1}{2}}\right \|_2^2}{\left \|\widehat{f}(\alpha)\right \|_{\infty}^{2-q'}})^{\frac{1}{q'}}}{K^{\frac{p(q-2)}{q(p-2)}}})\\
& \geq \frac{2q}{q-2}\log(\frac{\left \|\widehat{f}\right\|_{\infty}^{\frac{2-q}{q}}}{K^{\frac{p(q-2)}{q(p-2)}}}) =\log(\frac{1}{\left \|\widehat{f}\right \|_{\infty}^2 K^{\frac{2p}{p-2}}}).
\end{align*}

Hence, we obtain
\begin{equation}\label{ineq11}
e^{\lim_{q\rightarrow 2} \frac{2q}{2-q}\log(\frac{\left \|f\right \|_q}{\left \|\widehat{f}\right \|_{q'}}) } \geq  \frac{1}{\left \|\widehat{f}\right\|_{\infty}^2K^{\frac{2p}{p-2}}}.
\end{equation}

If $\g$ is of Kac type, the above (\ref{ineq11}) becomes
\[e^{H(\left |f\right |^2,\varphi_{\g})+H(\left |\widehat{f}\right |^2,\widehat{\varphi}_{\widehat{\g}})}\gtrsim \frac{1}{\left \|\widehat{f}\right\|_{\infty}^2}.\]

Also, Proposition \ref{prop1} and the fact that $\displaystyle \left \|\widehat{f}(\alpha)\right \|\leq \left \|\widehat{f}(\alpha)\right\|_{HS}\leq \frac{1}{\sqrt{n_{\alpha}}}$ for all $\alpha\in \mathrm{Irr}(\g)$ demonstrate the last argument.

\end{proof}

From now on, let us gather some well-known explicit examples of infinite $\Lambda(p)$-sets. On typical examples $\tor$ (among commutative ones) and $\widehat{\mathbb{F}_{\infty}}$ (among co-commutative ones), we introduce Hadamard sets and a Leinert set below.

\begin{example} \label{ex3} (Hadamard sets)

It is known that any Hadamard set $E=\left \{n_j\right\}_{j=1}^{\infty}\subseteq \n$ with $\displaystyle \inf_{j\in \n}\frac{n_{j+1}}{n_j}>1$ is a Sidon set (\cite{Si27a} and \cite{Si27b}), which automatically becomes a $\Lambda(p)$ set for all $2<p<\infty$ [Theorem 6.3.9, \cite{GrHa13}]. Hence, we have
\begin{equation}\label{ineq8}
e^{H(\left |f\right |^2,\varphi_{\tor})+H(\left |\widehat{f}\right |^2,\widehat{\varphi}_{\z}) }\gtrsim \frac{1}{\left \|\widehat{f}\right\|_{\infty}^2}
\end{equation}
for all $f \in \mathrm{Pol}_E(\tor)$ with $\left \|f\right\|_2=1$. In particular, we have
\begin{equation}
e^{H(\left |f_m\right |^2,\varphi_{\tor})+H(\left |\widehat{f_m}\right |^2,\widehat{\varphi}_{\z}) }\sim m
\end{equation}
for all $f_m\sim \displaystyle \frac{1}{\sqrt{m}}\sum_{j=1}^{m}z^{n_j}$ by Proposition \ref{prop2} and (\ref{ineq8}).
\end{example}

\begin{example} \label{ex2} (A Leinert set in $\mathbb{F}_{\infty}$ \cite{Le74}, \cite{Bo75} )

Let $E=\left \{g_j\right\}_{j=1}^{\infty}\subseteq \mathbb{F}_{\infty}$ be the generators of the free group $\mathbb{F}_{\infty}$. Since
\begin{equation}
\left \| \sum_{j=1}^n a_j\lambda_{g_j}\right\|_{VN(\mathbb{F}_{\infty})}\leq 2(\sum_{j=1}^n \left |a_j\right |^2)^{\frac{1}{2}}
\end{equation}
for any $n\in \n$ and scalars $a_j$, the subset $E$ is a $\Lambda(p)$-set with a universal constant $K=2$ for all $2<p<\infty$. Therefore, Theorem \ref{thm3} and taking limit as $p\rightarrow \infty$ gives us

\begin{equation}\label{ineq9}
e^{H(\left |f\right |^2,\varphi_{\widehat{\mathbb{F}_{\infty}}})+H(\left |\widehat{f}\right |^2,\widehat{\varphi}_{\mathbb{F}_{\infty}})}\geq \frac{1}{4\left \|\widehat{f}\right \|_{\infty}^2}
\end{equation}
for all $f\in \mathrm{Pol}_E(\widehat{\mathbb{F}_{\infty}})$ with $\left \|f\right\|_2=1$. In particular, by Proposition \ref{prop2} and (\ref{ineq9}), we have
\begin{equation}
\frac{n}{4}\leq e^{H(\left | \frac{1}{\sqrt{n}}\sum_{j=1}^n \lambda_{g_j}\right |^2, \varphi_{\widehat{\mathbb{F}_{\infty}}} )+H(\left |\frac{1}{\sqrt{n}}\sum_{j=1}^n \delta_{g_j}\right |^2,  \widehat{\varphi}_{\mathbb{F}_{\infty}} )}\leq n ~\mathrm{for~all~}n\in \n.
\end{equation}
\end{example}

A well-known example of infinite $\Lambda(p)$-sets $E$ with unbounded degrees, i.e. $\displaystyle \sup_{\alpha\in E}n_{\alpha}=\infty$, is so-called a $FTR$ set. Refer to \cite{GrHa13} for details.

\begin{example}[A FTR set]\label{ex1}

Suppose that $G=\displaystyle \prod_{j=1}^{\infty}U(2^j)$ and let $\pi_j:G\rightarrow U(2^j)$ be the canonical projections. Then $E=\left \{\pi_j\right\}_{j = 1}^{\infty}\subseteq \mathrm{Irr}(G)$ is a Sidon set, which is a $\Lambda(p)$ set for all $2<p<\infty$. Therefore, we have

\begin{equation}
e^{H(\left |f\right |^2,\varphi_G)+H(\left |\widehat{f}\right |^2,\widehat{\varphi}_{\widehat{G}})} \gtrsim  2^n 
\end{equation}
for all $n\in \n$ and $f\in \mathrm{Pol}_{E\setminus \left \{\pi_1,\cdots, \pi_n\right\}}(G)$ with $\left \|f\right\|_{L^2(G)}=1$.

\end{example}

\section{On the duals of compact Lie groups, $O_2^+$ and $SU_q(2)$}\label{sec:neg}

In contrast to the previous section, the explored improvements of uncertainty relations (Corollary \ref{cor1} (2) or Theorem \ref{thm3}) do not appear in the cases listed below. 
\begin{itemize}
\item $\g$ is a compact semisimple connected Lie group,
\item $\g=O_2^+$ or $\g=SU_q(2)$ with $0<q<1$. 
\end{itemize}

First of all, it is known that all connected semisimple compact Lie groups and $SU_q(2)$ do not admit infinite (local) $\Lambda(p)$-sets for any $p>2$ (\cite{GiTr80} and [Proposition 5.17, \cite{Wa17}]). Also, through the proof of Theorem \ref{thm1}, there exist no infinite $\Lambda(p)$-sets for $O_2^+$. These facts imply that Theorem \ref{thm3} is no longer applicable.

From now on, we will focus on the validity of Corollary \ref{cor1} (2) for the cases above. Recall that Corollary \ref{cor1} (2) or Example \ref{ex1} provides us with a sequence of finite subsets $(E_t)_{t\in \n}\subseteq \mathrm{Irr}(\g)$ such that
\[\lim_{t\rightarrow \infty} \inf_{f\in \mathrm{Pol}_{\mathrm{Irr}(\g)\setminus E_t}(\g):\left \|f\right\|_2=1}e^{\lim_{p\rightarrow 2}\frac{2p}{2-p}\log(\frac{\left \|f\right\|_p}{\left \|\widehat{f}\right \|_{p'}})}=\infty.\]

However, we will show that such a sequence does not exist if $\g$ is a connected semisimple compact Lie group, $O_2^+$ or $SU_q(2)$ with $0<q<1$. More precisely, for the compact quantum groups listed above, we will prove that
\begin{equation}
\sup_{\emptyset \neq E\subseteq \mathrm{Irr}(\g)} \inf_{f\in \mathrm{Pol}_E(\g):\left \|f\right\|_2=1}e^{\lim_{p\rightarrow 2}\frac{2p}{2-p}\log(\frac{\left \|f\right \|_p}{\left \|\widehat{f}\right \|_{p'}})}<\infty .
\end{equation}

Our strategy is to look closely at $L^p$-norms of certain matrix elements $u^{\alpha}_{i,i}$.

\begin{lemma}\label{lem1}
\begin{enumerate}
\item (Main theorem, \cite{GiTr80})

Let $G$ be a connected semisimple compact Lie group. Then for any $p>0$, there exist universal constants $A_p$ and $B_p$ such that for any unitary irreducible representation $\pi\in \mathrm{Irr}(G)$, there exists a matrix coefficient $a_{\pi}=\la \pi(\cdot)\xi,\xi \ra_{H_{\pi}}$ with $\left \|\xi\right\|_{H_{\pi}}=1$ such that
\[A_p <   n_{\pi} \left \|a_{\pi}\right\|_{p}^p<B_p.\]
\item For any $n\in \left \{0\right\}\cup \n$, $0\leq i,j\leq n$ and $1\leq p<\infty$, we have
\[\frac{\left \|\widehat{u^n_{i,j}}\right \|_{\ell^p(\widehat{SU_q(2)})}}{\left \|\widehat{u^n_{i,j}}\right \|_{\ell^2(\widehat{SU_q(2)})}}=(d_n(Q_n)_{i,i})^{\frac{2-p}{2p}}.\]
\end{enumerate}
\end{lemma}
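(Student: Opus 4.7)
Part (1) is precisely the main theorem of Giulini--Travaglini \cite{GiTr80}, so there is nothing to prove beyond citation. For part (2), the plan is to compute the Fourier coefficient of the matrix element $u^n_{i,j}$ explicitly via Schur's orthogonality and then to feed the result into the formula (\ref{eq5}) defining the non-commutative $\ell^p$-norm on $\widehat{SU_q(2)}$.

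First I would apply the first identity in (\ref{eq3}) with the index-pair $(l,k)$ in the first representation factor and $(i,j)$ in the second. Orthogonality forces the Fourier coefficient to vanish off $\alpha = n$, while at $\alpha = n$ it gives
\[\widehat{u^n_{i,j}}(n)_{k,l} = \varphi((u^n_{l,k})^{*} u^n_{i,j}) = \frac{\delta_{k,j}\,\delta_{l,i}}{d_n (Q_n)_{i,i}}.\]
Equivalently, $\widehat{u^n_{i,j}}(n) = \frac{1}{d_n (Q_n)_{i,i}}\,E_{j,i}$, where $E_{j,i}$ denotes the elementary matrix with a single $1$ in position $(j,i)$.

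Next, because $Q_n$ is diagonal, the product $E_{j,i}\, Q_n^{1/p}$ collapses to $(Q_n)_{i,i}^{1/p}\, E_{j,i}$, so $\widehat{u^n_{i,j}}(n)\, Q_n^{1/p}$ is a rank-one matrix with unique nonzero singular value $\frac{(Q_n)_{i,i}^{(1-p)/p}}{d_n}$. Substituting into (\ref{eq5}) and retaining only the $\alpha = n$ term gives
\[\|\widehat{u^n_{i,j}}\|_{\ell^p(\widehat{SU_q(2)})}^p = d_n \cdot \frac{(Q_n)_{i,i}^{1-p}}{d_n^{p}} = \frac{(Q_n)_{i,i}^{1-p}}{d_n^{p-1}},\]
hence $\|\widehat{u^n_{i,j}}\|_{\ell^p(\widehat{SU_q(2)})} = (d_n (Q_n)_{i,i})^{(1-p)/p}$. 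Taking the ratio against the $p=2$ case produces exponent $\frac{1-p}{p} - (-\frac{1}{2}) = \frac{2-p}{2p}$, which is exactly the stated identity.

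The argument is a bookkeeping exercise, so I would not flag any single step as a genuine obstacle; the one subtlety is the index convention in Schur's formula---the Fourier coefficient swaps $(i,j) \mapsto (j,i)$, so the weight $Q_n^{1/p}$ acting on $E_{j,i}$ from the right picks up $(Q_n)_{i,i}$ rather than $(Q_n)_{j,j}$, which is why the final ratio depends on $i$ alone.
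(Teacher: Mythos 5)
Your proposal is correct and follows the same route as the paper: the paper's proof of part (2) simply records the Fourier coefficient $\widehat{u^n_{i,j}}(k)=\frac{\delta_{n,k}(Q_n)_{i,i}^{-1}}{d_n}E^n_{j,i}$ and plugs it into the $\ell^p$-norm formula (\ref{eq5}), which is exactly your computation with the Schur-orthogonality bookkeeping written out. Your remark about the index swap $(i,j)\mapsto(j,i)$ correctly explains why the answer depends on $(Q_n)_{i,i}$ rather than $(Q_n)_{j,j}$.
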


\begin{proof}
(2) From the explicit formulas for $\ell^p$-norms (\ref{eq5}) and the fact that $\displaystyle \widehat{u^n_{i,j}}(k)=\frac{\delta_{n,k}(Q_n)_{i,i}^{-1}}{d_n}E^n_{j,i}$, we have
$\displaystyle \frac{\left \|\widehat{u^n_{i,j}}\right \|_{\ell^p(\widehat{SU_q(2)})}}{\left \|\widehat{u^n_{i,j}}\right \|_{\ell^2(\widehat{SU_q(2)})}}=(d_n(Q_n)_{i,i})^{\frac{2-p}{2p}}$
for all $1\leq p<\infty$.
\end{proof}

\begin{theorem}\label{thm1}

Let $\g$ be one of connected semisimple compact Lie groups, the free orthogonal quantum group $O_2^+$ and the quantum $SU(2)$ group $SU_q(2)$ with $0<q<1$. Then we have
\begin{equation}
\sup_{\emptyset \neq E\subseteq \mathrm{Irr}(\g)}\inf_{f\in \mathrm{Pol}_E(\g):\left \|f\right\|_{L^2(\g)}=1} e^{\lim_{p\rightarrow 2}\frac{2p}{2-p}\log(\frac{\left \|f\right \|_{L^p(\g)}}{\left \|\widehat{f}\right \|_{\ell^{p'}(\widehat{\g})}})}<\infty.
\end{equation}

\end{theorem}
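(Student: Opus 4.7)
The plan is: given any non-empty $E\subseteq \mathrm{Irr}(\g)$, pick any $\alpha\in E$ and exhibit a test function $f_\alpha\in \mathrm{Pol}_\alpha(\g)\subseteq \mathrm{Pol}_E(\g)$ with $\|f_\alpha\|_{L^2(\g)}=1$ whose entropic uncertainty exponent is bounded by a universal constant depending only on $\g$. The natural candidate is a single well-placed matrix coefficient: because $\widehat{f_\alpha}$ is then supported on $\{\alpha\}$, the dual contribution $H(|\widehat{f_\alpha}|^2,\widehat{\varphi}_{\widehat{\g}})$ reduces to a one-block computation involving only $n_\alpha$ and $Q_\alpha$, so the whole task becomes to match this with a controlled primal $L^p$-norm.

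For a connected semisimple compact Lie group $\g$, I would apply Lemma \ref{lem1}(1) with $p=1$ to obtain, for each $\alpha$, a matrix coefficient $a_\alpha$ satisfying $A_1\leq n_\alpha\|a_\alpha\|_{L^1(\g)}\leq B_1$. Setting $f_\alpha=n_\alpha^{1/2}a_\alpha$, Schur's orthogonality relation gives $\|f_\alpha\|_{L^2(\g)}=1$, a direct rank-one calculation gives $H(|\widehat{f_\alpha}|^2,\widehat{\varphi}_{\widehat{\g}})=\log n_\alpha$, and the R\'enyi inequality $H(\rho,\phi)\leq \frac{1}{1-\beta}\log\phi(\rho^\beta)$ for $\beta<1$, applied with $\beta=1/2$ and $\rho=|f_\alpha|^2$, yields
\[
H(|f_\alpha|^2,\varphi_\g) \leq 2\log\|f_\alpha\|_{L^1(\g)} = \log(n_\alpha\|a_\alpha\|_{L^1(\g)}^2) \leq 2\log B_1 - \log n_\alpha.
\]
Adding the two contributions cancels $\log n_\alpha$ and leaves $H(|f_\alpha|^2)+H(|\widehat{f_\alpha}|^2)\leq 2\log B_1$, a bound independent of $\alpha$ and $E$.

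For $SU_q(2)$ the analogue is $f_n$ proportional to $u^n_{n,n}$, chosen because $(Q_n)_{n,n}=q^n$ makes $d_n(Q_n)_{n,n}=d_n q^n$ converge to $\frac{1}{1-q^2}$ as $n\to\infty$. Lemma \ref{lem1}(2) immediately forces the dual $\ell^p$-norms, and hence $H(|\widehat{f_n}|^2,\widehat{\varphi}_{\widehat{SU_q(2)}})$, to stay bounded uniformly in $n$; combining this with a matching uniform bound on $\|f_n\|_{L^p(SU_q(2))}$ obtained from the explicit expression of $u^n_{n,n}$ as a monomial in the generators $a,c$ and the explicit $q$-Haar state, substitution into $\lim_{q\to 2}\frac{2q}{2-q}\log(\|f_n\|_q/\|\widehat{f_n}\|_{q'})$ gives the required universal bound. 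For $O_2^+$, which is Kac type with polynomial growth $n_k=k+1$, the semisimple-Lie argument applies verbatim once a Giulini--Travaglini-type estimate of the form $n_k\|u^k_{1,1}\|_{L^1(O_2^+)}\leq C$ is in hand; such an estimate should follow either directly from the concrete form of matrix coefficients of $O_2^+$ or by transfer through the monoidal equivalence between $O_2^+$ and $SU(2)$.

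The main obstacle is producing these primal $L^p$-estimates in the two quantum cases: Lemma \ref{lem1}(2) only controls the dual side of $SU_q(2)$, Lemma \ref{lem1}(1) is stated only for classical Lie groups, and neither covers $O_2^+$ directly. Once those matrix-coefficient bounds are in place, the remainder is a routine combination of the R\'enyi upper bound on the primal entropy with the explicit dual-block computation of $H(|\widehat{f_\alpha}|^2)$.
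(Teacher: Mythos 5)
Your overall architecture is exactly the paper's: for each nonempty $E$ pick a single $\alpha\in E$ and exhibit one matrix coefficient in $\mathrm{Pol}_\alpha(\g)$ whose uncertainty exponent is bounded by a constant depending only on $\g$. For connected semisimple compact Lie groups your argument is complete and equivalent to the paper's: the paper chooses $a_\pi$ with $\left\|a_\pi\right\|_{L^1(G)}\leq B_1 n_\pi^{-1}$ and interpolates to get $\left\|a_\pi\right\|_{L^p(G)}\leq B_1^{(2-p)/p}n_\pi^{-1/p}$, which yields the same bound $B_1^2$ that you obtain from the R\'enyi monotonicity $H(\left|f\right|^2,\varphi)\leq 2\log\left\|f\right\|_{L^1}$ together with the rank-one dual computation $H(\left|\widehat{f_\alpha}\right|^2,\widehat{\varphi})=\log n_\alpha$.

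The two quantum cases, however, are left with genuine gaps, and they are not symmetric. For $SU_q(2)$ the ``matching uniform bound on $\left\|f_n\right\|_{L^p}$'' you flag as needed is in fact unnecessary: since $L^2(\g)\hookrightarrow L^p(\g)$ is contractive for $p\leq 2$, one has $\log(\left\|u^n_{n,j}\right\|_p/\left\|u^n_{n,j}\right\|_2)\leq 0$, so the primal term contributes nonpositively to the (scale-invariant) exponent and the whole bound $\log(d_nq^n)\leq\log\frac{1}{1-q^2}$ comes from the dual side via Lemma \ref{lem1}(2) alone; note that a mere uniform bound on $\left\|f_n\right\|_{L^p}$ would not even suffice, because of the factor $\frac{2p}{2-p}$ blowing up as $p\to 2$. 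The real gap is $O_2^+$: the estimate $n_k\left\|u^k_{0,0}\right\|_{L^1(O_2^+)}\leq C$ does hold, but it does not ``follow by transfer through monoidal equivalence'' --- monoidal equivalence preserves fusion data, not $L^p$-norms of individual matrix coefficients. The paper instead uses the isomorphism $O_2^+\cong SU_{-1}(2)$ and the algebraic identity $u^n_{0,0}=(u^1_{0,0})^n$ with $u^1_{0,0}$ normal (from \cite{Ko89}), which forces \emph{all} moments of $\left|u^n_{0,0}\right|^2$ to coincide with those of $\left|\pi^n_{0,0}\right|^2$ on $SU(2)$, giving $\left\|u^n_{0,0}\right\|_p^p=\frac{1}{np/2+1}$ exactly and hence the uniform bound $e^{n/(n+1)}\leq e$. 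Without some concrete input of this kind your $O_2^+$ case remains unproved.
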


\begin{proof}
\begin{enumerate}
\item (The case of connected semisimple compact Lie groups)

For each $\pi\in \mathrm{Irr}(G)$, let us choose $a_{\pi}$ such that $\left \|a_{\pi}\right\|_{L^1(G)}\leq B_1 n_{\pi}^{-1}$ from Lemma \ref{lem1} (1). Then, by Theorem \ref{lem3}, we have $\left \|a_{\pi}\right\|_{L^p(G)} \leq B_1^{\frac{2-p}{p}}n_{\pi}^{-\frac{1}{p}}$ for all $\pi\in \mathrm{Irr}(G)$ and $1<p<2$.
Therefore,
\begin{align*}
\frac{2p}{2-p}\log(\frac{\left \|\sqrt{n_{\pi}}a_{\pi}\right \|_p}{\left \|\sqrt{n_{\pi}}\widehat{a_{\pi}}\right \|_{p'}})&=\frac{2p}{2-p}\log(\frac{\left \|\sqrt{n_{\pi}}a_{\pi}\right \|_p}{n_{\pi}^{\frac{1}{2}-\frac{1}{p}}})\\
&\leq \frac{2p}{2-p}\log(B_1^{\frac{2-p}{p}} )=\log(B_1^2),
\end{align*}
so that
\begin{align*}
&e^{H(\left |\sqrt{n_{\pi}}a_{\pi}\right |^2,\varphi_G)+H(\left |\sqrt{n_{\pi}}\widehat{a_{\pi}}\right |^2,\widehat{\varphi}_G)} \leq B_1^2.
\end{align*}

\item (The case of $O_2^+$)

The idea is to transfer some explicit computations on $SU(2)$ into the situation of $O_2^+$. First of all, in the case of $SU(2)$, we have $\mathrm{Irr}(SU(2))=\left \{\pi^n=(\pi^n_{i,j})_{0\leq i,j\leq n}\right\} \cong \left \{0\right\}\cup \n$ and $\displaystyle \left \|\pi^n_{0,0}\right\|_p=\frac{1}{(\frac{np}{2}+1)^{\frac{1}{p}}}$ for all $p$.

On the other side, $O_2^+\cong SU_{-1}(2)$ as compact quantum groups and $\mathrm{Irr}(SU_{-1}(2))=\left \{u^n=(u^n_{i,j})_{0\leq i,j\leq n}:n\in \left \{0\right\}\cup \n \right\}\cong \left \{0\right\}\cup \n$. If we denote by $a=u^1_{0,0}$, then $a^*a=aa^*$,  $u^n_{0,0}=a^n$ thanks to [Theorem 5.4, \cite{Ko89}].  Therefore, the moments of $(a^*a)^n=(u^n_{0,0})^*u^n_{0,0}$ is computed by
\[\varphi_{O_2^+}((a^{nk})^*a^{nk})=\varphi_{O_2^+}((u^{nk}_{0,0})^* u^{nk}_{0,0} )=\frac{1}{nk+1}=\left \|\pi^n_{0,0}\right \|_{2k}^{2k}=\varphi_{SU(2)}(\left |\pi^n_{0,0}\right |^{2k})\]
for all $k\in \left \{0\right\}\cup \n$. In other words, all the moments of $(u^n_{0,0})^*u^n_{0,0}$ and $\left |\pi^n_{0,0}\right |^2$ coincide. Therefore, we can conclude that
\[\left \|u^n_{0,0} \right \|_p^p=\varphi_{O_2^+}((\left |u^n_{0,0}\right |^2)^{\frac{p}{2}})=\varphi_{SU(2)}((\left |\pi^n_{0,0}\right |^2)^{\frac{p}{2}})=\left\| \pi^n_{0,0}\right \|_{p}^p=\frac{1}{\frac{np}{2}+1}\]
for all $1\leq p<\infty$. Therefore,
\[\lim_{p\rightarrow 2}\frac{2p}{2-p}\log(\frac{\left \|\sqrt{n+1}u^n_{0,0}\right \|_{L^p(O_2^+)}}{\left \|\sqrt{n+1}\widehat{u^n_{0,0}}\right \|_{\ell^{p'}(\widehat{O_2^+})}})=\frac{n}{n+1}\leq 1.\]

\item (The case of $SU_q(2)$)

For all $n\geq 0$ and $0\leq j\leq n$, we have
\begin{align*}
&\lim_{p\rightarrow 2} \frac{2p}{2-p} \log(\frac{\left \|u^n_{n,j}\right \|_{L^p(SU_q(2))}}{\left \|\widehat{u^n_{n,j}}\right \|_{\ell^{p'}(\widehat{SU_q(2)})}})\\
&=\lim_{p\nearrow 2} \frac{2p}{2-p} (\log(\frac{\left \|u^n_{n,j}\right \|_{L^p(SU_q(2))}}{\left \|u^n_{n,j}\right \|_{L^{2}(SU_q(2))}})+\log(\frac{\left \|\widehat{u^n_{n,j}}\right \|_{\ell^2(\widehat{SU_q(2)})}}{\left \|\widehat{u^n_{n,j}}\right \|_{\ell^{p'}(\widehat{SU_q(2)})}}))
\end{align*}
and $\displaystyle  \log(\frac{\left \|u^n_{n,j}\right \|_{L^p(SU_q(2))}}{\left \|u^n_{n,j}\right \|_{L^{2}(SU_q(2))}})\leq 0$ for all $1\leq p\leq 2$. Thus, by Lemma \ref{lem1} (2),

\begin{align*}
\lim_{p\rightarrow 2} \frac{2p}{2-p} \log(\frac{\left \|u^n_{n,j}\right \|_{L^p(SU_q(2))}}{\left \|\widehat{u^n_{n,j}}\right \|_{\ell^{p'}(\widehat{SU_q(2)})}}) &\leq \lim_{p \nearrow 2}\frac{2p}{2-p}\log (\frac{\left \|\widehat{u^n_{n,j}}\right \|_{\ell^2(\widehat{SU_q(2)})}}{\left \|\widehat{u^n_{n,j}}\right \|_{\ell^{p'}(\widehat{SU_q(2)})}})\\
&=\lim_{p' \searrow 2}\frac{2p'}{p'-2}\log (\frac{\left \|\widehat{u^n_{n,j}}\right \|_{\ell^2(\widehat{SU_q(2)})}}{\left \|\widehat{u^n_{n,j}}\right \|_{\ell^{p'}(\widehat{SU_q(2)})}})\\
&=\lim_{p' \searrow 2}\frac{2p'}{p'-2}\log ((d_n(Q_n)_{n,n})^{\frac{p'-2}{2p'}})\\
&=\log(d_n q^n)\leq \log(\frac{1}{1-q^2}).
\end{align*}

\end{enumerate}
\end{proof}

\section{The divergence of uncertainty relations}

In principle, if $\g$ is not finite, the uncertainty relation should be divergent 
\[e^{\lim_{q\rightarrow 2}\frac{2q}{2-q}\log(\frac{\left \|f\right \|_{L^q(\g)}}{\left \|\widehat{f}\right\|_{\ell^{q'}(\widehat{\g})}})}>>1\]
at certain element $f\in \mathrm{Pol}(\g)$ by Corollary \ref{cor0} and the fact that the Fourier transform $\mathcal{F}:L^1(\g)\rightarrow \ell^{\infty}(\widehat{\g})$ is not bounded below. But finding explicit elements showing the divergence is worthy of independent attention.

Indeed, Corollary \ref{cor1} (2) and Examples \ref{ex3}, \ref{ex2}, \ref{ex1} provide us with explicit elements showing the divergence. However, we need a different idea for connected compact Lie groups, $O_2^+$ or $SU_q(2)$ in view of Section \ref{sec:neg}, and the main point of this Section is that the divergence of uncertainty relations is attained at certain (linear combinations of) characters for compact quantum groups mentioned above.

\begin{theorem}\label{thm4}
\begin{enumerate}
\item For $G$ a connected compact Lie group, we have
\begin{equation}
e^{H(\left |\sum_{\pi\in \mathrm{Irr}(G)}\frac{m_{\pi}}{\left \|m\right \|_2 }\chi_{\pi}\right |^2,\varphi_G)+H(\left |\sum_{\pi\in \mathrm{Irr}(G)}\frac{m_{\pi}}{\left \|m\right\|_2}\widehat{\chi_{\pi}}\right |^2,\widehat{\varphi}_{\widehat{G}})} \gtrsim \min_{\pi\in supp(m)}\frac{n_{\pi}^2}{m_{\pi}^2}
\end{equation}
for any non-zero finite integral sequence $m=(m_{\pi})_{\pi\in \mathrm{Irr}(G)}\subseteq \z$. Here, $\left \|m\right\|_2=\displaystyle (\sum_{\pi\in \mathrm{Irr}(G)}m_{\pi}^2)^{\frac{1}{2}}$ and $supp(m)=\left \{\pi\in \mathrm{Irr}(G):m_{\pi}\neq 0\right\}$.

\item Let $\g$ be the free orthogonal quantum group $O_N^+$ with $N\geq 2$. Then for any non-zero finite integral sequence $m=(m_{k})_{k\geq 0}\subseteq \z$, we have
\begin{equation}
e^{H(\left |\sum_{k\geq 0}\frac{m_{k}}{\left \|m\right \|_2 }\chi_{k}\right |^2,\varphi_{O_N^+})+H(\left |\sum_{k\geq 0}\frac{m_{k}}{\left \|m\right\|_2}\widehat{\chi_{k}}\right |^2,\widehat{\varphi}_{\widehat{O_N^+}})} \gtrsim \min_{k\in supp(m)}\frac{n_{k}^2}{m_k^2}.
\end{equation}

\end{enumerate}

\end{theorem}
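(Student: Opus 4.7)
The strategy is to compute the Fourier transform of $f := \sum_{\alpha} (m_\alpha/\|m\|_2)\chi_\alpha$ explicitly, apply Corollary \ref{cor0} with $p=1$, and reduce the statement to a uniform $L^1$-lower bound on integral combinations of characters.

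Since both $O_N^+$ and every compact connected Lie group are of Kac type, Schur's orthogonality relation \eqref{eq3} (with $Q_\alpha = \mathrm{Id}_{n_\alpha}$) yields $\|f\|_{L^2(\g)} = 1$ and $\widehat{\chi_\alpha}(\beta) = \delta_{\alpha,\beta}I_{n_\alpha}/n_\alpha$. Consequently
\[
\widehat{f}(\alpha) = \frac{m_\alpha}{\|m\|_2\, n_\alpha}\, I_{n_\alpha}\qquad (\alpha\in\mathrm{supp}(m)),
\]
so that $\|\widehat{f}\|_{\ell^\infty(\widehat{\g})} = \max_{\alpha\in\mathrm{supp}(m)} |m_\alpha|/(\|m\|_2\, n_\alpha)$ and
\[
\frac{1}{\|\widehat{f}\|_{\ell^\infty}^{\,2}} \;=\; \|m\|_2^{\,2}\cdot \min_{\alpha\in \mathrm{supp}(m)}\frac{n_\alpha^2}{m_\alpha^2}.
\]

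Applying Corollary \ref{cor0} at $p=1$ (so that $p'=\infty$), together with the identity $\|g\|_{L^1(\g)} = \|m\|_2\|f\|_{L^1(\g)}$ for $g := \sum_\alpha m_\alpha\chi_\alpha$, I obtain
\[
e^{H(|f|^2,\varphi_\g)+H(|\widehat{f}|^2,\widehat{\varphi}_{\widehat{\g}})} \;\geq\; \Bigl(\frac{\|f\|_{L^1(\g)}}{\|\widehat{f}\|_{\ell^\infty(\widehat{\g})}}\Bigr)^{\!2} \;=\; \|g\|_{L^1(\g)}^{\,2}\cdot \min_{\alpha\in \mathrm{supp}(m)}\frac{n_\alpha^2}{m_\alpha^2}.
\]
It therefore remains to establish the uniform lower bound $\|g\|_{L^1(\g)} \gtrsim 1$ over nonzero integral sequences $m$.

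For part (2), Banica's theorem identifies the spectral distribution of $\chi_1$ under $\varphi_{O_N^+}$ with the semicircle law on $[-2,2]$ and $\chi_k$ with the normalised Chebyshev polynomial of the second kind. After substituting $x = 2\cos\theta$, the problem reduces to controlling $\int_0^\pi |h(\theta)|\sin\theta\,d\theta$ from below, where $h(\theta) := \sum_k m_k\sin((k+1)\theta)$ is an odd trigonometric polynomial with integer coefficients. Here the integrality of $m$ gives $\|\widehat{h}\|_{\ell^\infty} \geq 1/2$, so Hausdorff--Young on $\mathbb{T}$ yields $\|h\|_{L^1(\mathbb{T})}\gtrsim 1$, and a localisation away from $\theta=0,\pi$ (using Bernstein's inequality together with Lemma~\ref{lem4} to control $\|h\|_\infty$) recovers the weighted $L^1$-bound. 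For part (1), Weyl's integration formula combined with the Weyl character formula reduces $\|g\|_{L^1(G)}$ to the $L^1$-norm (against $|\Delta|^2$) of a $W$-antisymmetric Fourier series with integer coefficients on the maximal torus, to which the analogous Hausdorff--Young/integrality argument applies.

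The principal technical obstacle is this uniform lower bound $\|g\|_{L^1(\g)}\gtrsim 1$. When $g$ is a single character it is immediate from a direct computation (the semicircle integral in the $O_N^+$ case converges to an absolute constant as the degree tends to infinity, and similarly Weyl integration handles the Lie-group case), but for an arbitrary integral combination the argument depends on the structural reductions above (semicircle law for $O_N^+$, Weyl integration for compact connected Lie groups).
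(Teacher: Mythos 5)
Your reduction is exactly the paper's: apply Corollary \ref{cor0} at $p=1$, compute $\widehat{\chi_\alpha}(\beta)=\delta_{\alpha,\beta}n_\alpha^{-1}I_{n_\alpha}$ so that $\bigl\|\widehat{f}\bigr\|_{\ell^\infty}^{-2}=\|m\|_2^2\min_\alpha n_\alpha^2/m_\alpha^2$, and thereby reduce everything to the uniform lower bound $\bigl\|\sum_\alpha m_\alpha\chi_\alpha\bigr\|_{L^1}\gtrsim 1$ over nonzero integral sequences $m$. Your treatment of part (2) via the semicircularity of $\chi_1$ and the Chebyshev recursion is also the same transfer to $SU(2)$ that the paper performs via [Lemma 4.7, Yo18]. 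The difference is that the paper takes the uniform $L^1$ lower bound as a black box ([Lemma, Pr75]), whereas you attempt to prove it, and it is precisely there that your argument has a genuine gap.

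Concretely: Hausdorff--Young plus integrality gives $\|h\|_{L^1(\tor)}\geq 1/2$ for $h(\theta)=\sum_k m_k\sin((k+1)\theta)$, but what you need is $\int_0^\pi|h(\theta)|\sin\theta\,d\theta\gtrsim 1$, and the weight $\sin\theta$ vanishes exactly where such polynomials can concentrate their $L^1$ mass. Your proposed fix --- localising away from $\theta=0,\pi$ using Bernstein's inequality --- cannot produce a constant independent of $m$: Bernstein controls $h$ on an interval of length comparable to $1/\deg(h)$, on which the weight $\sin\theta$ is itself only of size $1/\deg(h)$, so the resulting bound degenerates as the degree grows; and the degree is unbounded over the family of admissible $m$. (The appeal to Lemma \ref{lem4} here is also misplaced: that lemma is a rapid-decay estimate for $\mathrm{Pol}_k(O_N^+)$, $N\geq 3$, and says nothing about the sup-norm of a scalar trigonometric polynomial.) The same issue recurs, aggravated by higher rank, in your part (1) sketch, where you must pass from $\|g\Delta\|_{L^1(T)}\geq 1$ to $\int_T|g\Delta|\,|\Delta|\,dt\geq c(G)$ against the vanishing Weyl density. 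This uniform lower bound is a genuine theorem (the cited [Lemma, Pr75]) and your sketch does not reconstruct its proof; as written the argument is incomplete at exactly the step you yourself flag as the principal obstacle.
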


\begin{proof}
\begin{enumerate}
\item By Corollary \ref{cor0}, it is enough to show that
\[\frac{\left \| \sum_{\pi\in \mathrm{Irr}(G)}m_{\pi}\chi_{\pi}\right \|_{L^1(G)}}{\left \| \sum_{\pi\in \mathrm{Irr}(G)}m_{\pi}\widehat{\chi_{\pi}}\right \|_{\ell^{\infty}(\widehat{G})}}\gtrsim  \min_{\pi\in supp(m)}\frac{n_{\pi}}{m_{\pi}}.\]

Indeed, by [Lemma, \cite{Pr75}], there exists a universal constant $c(G)$ such that
\[\frac{\left \| \sum_{\pi\in \mathrm{Irr}(G)}m_{\pi}\chi_{\pi}\right \|_{L^1(G)}}{\left \| \sum_{\pi\in \mathrm{Irr}(G)}m_{\pi}\widehat{\chi_{\pi}}\right \|_{\ell^{\infty}(\widehat{G})}}\geq \frac{c(G)}{\max_{\pi\in supp(m)}\frac{m_{\pi}}{n_{\pi}}}=c(G)\min_{\pi\in supp(m)}\frac{n_{\pi}}{m_{\pi}}\]
\item Through [Lemma 4.7, \cite{Yo18}] and Theorem \ref{thm4} (1),  we are able to obtain the conclusion.

\end{enumerate}
\end{proof}

Note that Theorem \ref{thm4} and (\ref{eq6}) tells us that the uncertainty relations diverge at characters since
\begin{equation}\label{ineq12}
e^{H(\left |\chi_{\alpha}\right |^2,\varphi_{\g})+H(\left |\widehat{\chi_{\alpha}}\right |^2,\widehat{\varphi}_{\widehat{\g}})}\sim n_{\alpha}^2~\mathrm{for~all~}\alpha\in \mathrm{Irr}(\g)
\end{equation}
if $\g$ is a connected compact Lie group or a free orthogonal quantum group.

In contrast, this divergence at characters does not appear in the case of the quantum $SU(2)$ group $SU_q(2)$.

\begin{proposition}\label{prop3}
For the quantum $SU(2)$ group $SU_q(2)$ with $0<q<1$, we have
\begin{equation} 
\sup_{n\in \n} e^{\lim_{p\rightarrow 2} \frac{2p}{2-p} \log(\frac{\left \|\chi_n \right \|_{L^p(SU_q(2))}}{\left \|\widehat{\chi_n}\right \|_{\ell^{p'}(\widehat{SU_q(2)})}})} <\infty.
\end{equation}

\end{proposition}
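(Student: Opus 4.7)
My plan is to adapt the splitting strategy already used in the proof of Theorem \ref{thm1}(3) for the matrix coefficients $u^n_{n,j}$. Using the Plancherel identity $\|\chi_n\|_{L^2(SU_q(2))} = \|\widehat{\chi_n}\|_{\ell^2(\widehat{SU_q(2)})} = 1$, I decompose, for $1 \leq p < 2$,
\[
\frac{2p}{2-p}\log\frac{\|\chi_n\|_{L^p(SU_q(2))}}{\|\widehat{\chi_n}\|_{\ell^{p'}(\widehat{SU_q(2)})}} = \frac{2p}{2-p}\log\frac{\|\chi_n\|_{L^p(SU_q(2))}}{\|\chi_n\|_{L^2(SU_q(2))}} + \frac{2p}{2-p}\log\frac{\|\widehat{\chi_n}\|_{\ell^2(\widehat{SU_q(2)})}}{\|\widehat{\chi_n}\|_{\ell^{p'}(\widehat{SU_q(2)})}}.
\]
Since $\varphi_{SU_q(2)}$ is a probability state, the usual monotonicity $\|\chi_n\|_{L^p} \leq \|\chi_n\|_{L^2}$ holds for $p \leq 2$, so the first summand is $\leq 0$ and contributes nothing positive in the limit $p\nearrow 2$. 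Thus it suffices to bound the second summand uniformly in $n$.

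By Schur's orthogonality relation \eqref{eq3}, $\widehat{\chi_n}(k) = \delta_{n,k}\,Q_n^{-1}/d_n$, and \eqref{eq5} yields the explicit formula $\|\widehat{\chi_n}\|_{\ell^{p'}(\widehat{SU_q(2)})}^{p'} = d_n^{1-p'}\,\mathrm{tr}(Q_n^{1-p'})$. Let $\mu_n$ denote the probability distribution $\mu_n(j) = (Q_n)_{j,j}/d_n$ on $\{0,1,\dots,n\}$, and set $F(p') := \log\mathrm{tr}(Q_n^{1-p'}) - (p'-1)\log d_n$, so that $\log\|\widehat{\chi_n}\|_{\ell^{p'}} = F(p')/p'$. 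A short computation using the identity $\mathrm{tr}(Q_n^{-1}\log Q_n)/d_n = H(\mu_n) - \log d_n$ (which follows by writing $\log\mu_n(j) = \log(Q_n)_{j,j} - \log d_n$) shows that $F(2)=0$ and $F'(2) = -H(\mu_n)$, where $H(\mu_n)$ is the classical Shannon entropy. Combining Taylor expansion at $p'=2$ with the algebraic identity $\frac{2p}{2-p} = \frac{2p'}{p'-2}$ yields
\[
\lim_{p\to 2}\frac{2p}{2-p}\log\frac{\|\widehat{\chi_n}\|_{\ell^2(\widehat{SU_q(2)})}}{\|\widehat{\chi_n}\|_{\ell^{p'}(\widehat{SU_q(2)})}} = 2H(\mu_n).
\]

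To conclude, I will verify that $\sup_n H(\mu_n) < \infty$. The key observation is that the ratio $\mu_n(j+1)/\mu_n(j) = q^2 < 1$ is independent of $n$, so $\mu_n$ is a truncated geometric distribution with parameter $q^2$; it converges weakly to the geometric distribution $\mu_\infty(j) = (1-q^2)q^{2j}$ on $\n$, whose Shannon entropy equals $\log\frac{1}{1-q^2} + \frac{2q^2|\log q|}{1-q^2} < \infty$. Expanding $H(\mu_n) = -\log\mu_n(0) + 2|\log q|\,E_{\mu_n}[j]$ and bounding $\mu_n(0) = (1-q^2)/(1-q^{2n+2}) \geq 1-q^2$ and $E_{\mu_n}[j] \leq q^2/(1-q^2)^2$ furnishes the required uniform bound.

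I expect the most delicate point to be the justification of the monotonicity $\|\chi_n\|_{L^p(SU_q(2))} \leq \|\chi_n\|_{L^2(SU_q(2))}$ in the non-tracial setting of $SU_q(2)$. This holds because the Kosaki interpolation $L^p$-norms satisfy $\|\cdot\|_{L^p} \leq \|\cdot\|_{L^q}$ for $p\leq q$ whenever $\varphi$ is a normal faithful state with $\|1\|_{L^p}=1$; alternatively, one may observe that the character $\chi_n$ is a fixed point of the modular automorphism group (since $\sigma^\varphi_t$ acts diagonally on each $u^n_{j,j}$ by $(Q_n)_{j,j}^{it}(Q_n)_{j,j}^{-it}=1$), so that $\chi_n$ lies in the centralizer of $\varphi_{SU_q(2)}$ and the computation reduces to a classical one to which Lemma~\ref{lem2}(1) and Jensen's inequality apply directly.
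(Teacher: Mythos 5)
Your argument is correct and follows the paper's proof essentially step for step: discard the $L^p$-norm of $\chi_n$ using monotonicity of the $L^p$-norms of a state, identify the remaining limit as $2H(\mu_n)$ for the probability distribution $\mu_n(j)=(Q_n)_{j,j}/d_n$ (the paper's $x_n$ up to the reflection $j\mapsto n-j$), and bound this entropy uniformly in $n$ via the truncated-geometric structure, arriving at the same limiting value $\log\frac{1}{1-q^2}+\frac{2q^2\log(q^{-1})}{1-q^2}$. One caveat: your ``alternative'' justification of the monotonicity is wrong --- the modular group of the Haar state acts by $\sigma_t^{\varphi}(u^n_{j,j})=(Q_n)_{j,j}^{2it}\,u^n_{j,j}$ (it is the \emph{scaling} group that gives the factor $(Q_n)_{j,j}^{it}(Q_n)_{j,j}^{-it}=1$), so $\chi_n$ is not in the centralizer of $\varphi_{SU_q(2)}$; fortunately your primary justification, the contractivity of $L^q(\g)\hookrightarrow L^p(\g)$ for $p\leq q$ already recorded in the paper's preliminaries, is all that is needed.
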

\begin{proof}
For each $n\in \n$, note that $\displaystyle \widehat{\chi_n}(k)=\frac{\delta_{n,k}}{d_n}Q_{n}^{-1}\in M_{n+1}\subseteq \ell^{\infty}(\widehat{SU_q(2)})$. Therefore,

\begin{align*}
\lim_{p\nearrow 2} \frac{2p}{2-p} \log(\frac{\left \|\chi_n\right \|_{L^p(SU_q(2))}}{\left \|\widehat{\chi_n}\right \|_{\ell^{p'}(\widehat{SU_q(2)})}})&\leq \lim_{p \nearrow 2}\frac{2p}{2-p}\log (\frac{1}{\left \|\widehat{\chi_n }\right \|_{\ell^{p'}(\widehat{SU_q(2)})}})\\
&=\lim_{p' \searrow 2}\frac{2p'}{2-p'}\log ((d_n^{1-p'}\mathrm{tr}(Q_n^{1-p'}))^{\frac{1}{p'}})\\
&=-2 \sum_{j=0}^n \frac{1}{d_nq^{-n+2j}}\log(\frac{1}{d_nq^{-n+2j}})=2H(x_n),
\end{align*}
where $x_n=\displaystyle (\frac{1}{d_nq^{-n+2j}})_{j=0}^n$. Now, let us show that $\left \{ e^{2H(x_n)}:n\in \n \right\}$ is uniformly bounded. Recall that $\displaystyle d_n=\frac{q^{-n}(1-q^{2n+2})}{1-q^2}$ for all $n\in \n$. Then
\begin{align*}
H(x_n)&=\sum_{j=0}^n \frac{1}{d_n q^{-n+2j}}\log(d_nq^{-n+2j})\\
&=\log(d_nq^{-n})+\frac{1}{d_n}\sum_{j=0}^n \frac{\log(q^{2j})}{q^{-n+2j}}\\
&=\log(d_nq^{-n})+\frac{2q^n\log(q)}{d_n}\sum_{j=0}^n jq^{-2j}\\
&=\log(d_nq^{-n})+\frac{2q^{2}\log(q)}{1-q^{2n+2}}(nq^{-2}-\frac{1-q^{2n}}{1-q^2})\\
&=-2n\log(q)+\log(\frac{1-q^{2n+2}}{1-q^2})+\frac{2q^{2}\log(q)}{1-q^{2n+2}}(nq^{-2}-\frac{1-q^{2n}}{1-q^2})\\
&=\frac{2nq^{2n+2}}{1-q^{2n+2}}\log(q)+\log(\frac{1-q^{2n+2}}{1-q^2}) - \frac{2q^2(1-q^{2n})\log(q)}{(1-q^2)(1-q^{2n+2})}.
\end{align*}

Now, it is sufficient to see that
\[\lim_{n\rightarrow \infty}H(x_n)=0+\log(\frac{1}{1-q^2})+\frac{2q^2\log(q^{-1})}{1-q^2}<\infty.\]

\end{proof}

\begin{remark}
Let $G$ be a connected compact Lie group. Then the estimates (\ref{ineq12}) can be also explained from the facts that
\begin{equation}
\inf_{\pi\in \mathrm{Irr}(G)}H(\left |\chi_{\pi}\right |^2,\varphi_G)>-\infty~\mathrm{and~}H(\left |\widehat{\chi_{\pi}}\right |^2,\widehat{\varphi}_{\widehat{G}})= \log(n_{\pi}^2),
\end{equation}
The former can be proved by Corollary \ref{cor0} and [Theorem 5.4, \cite{Do79}].

\end{remark}

Although connected semisimple compact Lie groups do not admit infinite (local) central $\Lambda(4)$-sets (\cite{Ce72} and [Corollary 7, \cite{GiSoTr82}]), a strong contrast holds for $SU_q(2)$. More precisely, for $E=\left \{m_k=\frac{k(k+1)}{2}:k\in \n \right\}\subseteq \left \{0\right\}\cup \n$, there exists a universal constant $K=K(q,E)$ such that
\begin{equation}\label{ineq7}
\left \|\sum_{n\in E} a_n \chi_n \right \|_{L^4(SU_q(2))}\leq K \left \|\sum_{n\in E}a_n \chi_n \right \|_{L^2(SU_q(2))}
\end{equation}
for any finitely supported sequence $(a_n)_{n\in E}\subseteq \Comp$ [Proposition 5.16, \cite{Wa17}]. Through a similar proof of Theorem \ref{thm3}, we can apply this lacunarity result to detect the divergence of uncertainty relations in $SU_q(2)$.

\begin{corollary}\label{thm5}
Set $E=\left \{m_k=\frac{k(k+1)}{2}:k\in \n \right\}\subseteq \left \{0\right\}\cup \n$, Then for any $f=\displaystyle \sum_{n\in E}a_n\chi_n\in \mathrm{Pol}_E(SU_q(2))$ with $\left \|f\right\|_{L^2(SU_q(2))}=1$, we have
\begin{equation}
 e^{\lim_{p\rightarrow 2} \frac{2p}{2-p} \log(\frac{\left \|f \right \|_{L^p(SU_q(2))}}{\left \|\widehat{f}\right \|_{\ell^{p'}(\widehat{SU_q(2)})}})}\gtrsim e^{H( (\left |a_n \right |^2)_{n\in E})}
\end{equation}

\end{corollary}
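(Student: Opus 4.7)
The plan is to run the argument of Theorem \ref{thm3} in the non-Kac setting of $SU_q(2)$, substituting the central $\Lambda(4)$-bound (\ref{ineq7}) for the role of a general $\Lambda(p)$-set. Fix $2<p<4$. Applying Theorem \ref{lem3} to (\ref{ineq7}) produces $\|f\|_{L^p(SU_q(2))}\leq K^{2(p-2)/p}$ for every $L^2$-normalized $f=\sum_{n\in E}a_n\chi_n$, hence $\frac{2p}{2-p}\log\|f\|_{L^p(SU_q(2))}\geq -4\log K$.

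For the dual side, I would first record that $\widehat{\chi_n}(k)=\delta_{n,k}d_n^{-1}Q_n^{-1}$ (as used in Lemma \ref{lem1}(2) and Proposition \ref{prop3}), so that $\widehat{f}(k)=(a_k/d_k)Q_k^{-1}$ for $k\in E$. Two direct computations give $d_n\|\widehat{f}(n)Q_n^{1/2}\|_{HS}^2=|a_n|^2$, which in particular recovers the Plancherel identity $\|\widehat{f}\|_{\ell^2(\widehat{SU_q(2)})}=1$, and $\|\widehat{f}(n)\|_\infty=|a_n|q^{-n}/d_n$, since the largest eigenvalue of $Q_n^{-1}$ equals $q^{-n}$. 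Feeding these into Proposition \ref{prop2}(2) yields
\begin{equation*}
\|\widehat{f}\|_{\ell^{p'}(\widehat{SU_q(2)})}^{p'}\geq \sum_{n\in E}\frac{|a_n|^2}{\|\widehat{f}(n)\|_\infty^{2-p'}}=\sum_{n\in E}|a_n|^{p'}(d_nq^n)^{2-p'}.
\end{equation*}

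The Shannon entropy then emerges by applying the concavity of $\log$ with respect to the probability measure $(|a_n|^2)_{n\in E}$:
\begin{equation*}
\log\|\widehat{f}\|_{\ell^{p'}(\widehat{SU_q(2)})}^{p'}\geq \frac{2-p'}{2}\,H\bigl((|a_n|^2)_{n\in E}\bigr)+(2-p')\sum_{n\in E}|a_n|^2\log(d_nq^n).
\end{equation*}
Multiplying by $\frac{2p}{p'(p-2)}$ and using the elementary identities $(2-p')/p'=(p-2)/p$ and $p(2-p')=p'(p-2)$ collapses the leading coefficient to exactly $1$, so combining with the $L^p$-bound and exponentiating gives, for every $2<p<4$,
\begin{equation*}
e^{\frac{2p}{2-p}\log(\|f\|_{L^p(SU_q(2))}/\|\widehat{f}\|_{\ell^{p'}(\widehat{SU_q(2)})})}\geq \frac{e^{H((|a_n|^2)_{n\in E})}\cdot e^{2B}}{K^4},
\end{equation*}
where $B=\sum_{n\in E}|a_n|^2\log(d_nq^n)$.

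The main point to verify, and the one most likely to be delicate, is that the correction $B$ coming from the non-Kac weighting in the $\ell^{p'}$-norm can be absorbed into the implicit constant of $\gtrsim$. This is secured by the closed form $d_nq^n=(1-q^{2n+2})/(1-q^2)$, which gives the $f$-independent two-sided bound $1\leq d_nq^n\leq (1-q^2)^{-1}$, hence $0\leq B\leq \log((1-q^2)^{-1})$. Letting $p\searrow 2$ then yields the claimed estimate.
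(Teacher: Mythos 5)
Your proof is correct and follows essentially the same route as the paper's: interpolate the central $\Lambda(4)$ bound (\ref{ineq7}) via Theorem \ref{lem3} to control $\left \|f\right \|_{L^p(SU_q(2))}$ for $2<p<4$, lower-bound $\left \|\widehat{f}\right \|_{\ell^{p'}(\widehat{SU_q(2)})}$ so as to extract the Shannon entropy of $(\left |a_n\right |^2)_{n\in E}$, discard a nonnegative $Q_n$-correction, and pass to the limit $p\rightarrow 2$. The only (harmless) variation is on the dual side: the paper computes $\left \|\widehat{f}\right \|_{\ell^{p'}}^{p'}=\sum_{n\in E}d_n^{1-p'}\left |a_n\right |^{p'}\mathrm{tr}(Q_n^{1-p'})$ exactly and evaluates the $p'\rightarrow 2$ limit as a derivative, obtaining $H((\left |a_n\right |^2)_{n\in E})+2\sum_{n\in E}\left |a_n\right |^2H(x_n)$ with $H(x_n)\geq 0$, whereas you invoke Proposition \ref{prop2} (2) together with Jensen's inequality at each fixed $p$ and then bound your correction $B=\sum_{n\in E}\left |a_n\right |^2\log(d_nq^n)\in[0,\log((1-q^2)^{-1})]$ — both corrections are nonnegative and are dropped for the same reason, yielding the same constant $K^{-4}$.
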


\begin{proof}

By Corollary \ref{cor0}, for all $2<p<4$ and $f=\displaystyle \sum_{n\in E}a_n\chi_n\in \mathrm{Pol}_E(SU_q(2))$ with $\left \|f\right\|_2=1$, we have
\[\left \|f\right\|_p \leq K^{\frac{2(p-2)}{p}}\left \|f\right\|_2\mathrm{~and~} \left \|\widehat{f}(\alpha)\right \|_{\infty}^{2-p'}\left \|\widehat{f}(\alpha)Q_{\alpha}^{\frac{1}{p'}}\right\|_{S^{p'}_{n+1}}^{p'}\geq \left \|\widehat{f}(\alpha)Q_{\alpha}^{\frac{1}{2}}\right\|_{HS}^2\]
for all $\alpha\in \mathrm{supp}(\widehat{f})$. Therefore, for all $2<p<4$, we have 
\begin{align*}
\frac{2p}{p-2}\log(\frac{\left \|\widehat{f}\right \|_{p'}}{\left \|f\right \|_p})&=\frac{2p}{p-2}(\log (\frac{\left \|\widehat{f}\right \|_{p'}}{\left \|\widehat{f}\right \|_2})+\log (\frac{\left \|f\right \|_{2}}{\left \|f\right \|_p}))\\
& \geq  \frac{2p}{p-2}( \log((\sum_{n\in E}d_n^{1-p'}\left |a_n\right |^{p'} \mathrm{tr}(Q_n^{1-p'}))^{\frac{1}{p'}})-\frac{2(p-2)}{p}\log(K)) \\
&=\frac{2}{2-p'}\log(\sum_{n\in E}d_n^{1-p'}\left |a_n\right |^{p'} \mathrm{tr}(Q_n^{1-p'}))-4\log(K).
\end{align*}

Therefore,
\begin{align*}
&K^4e^{\lim_{p\rightarrow 2}\frac{2p}{2-p}\log(\frac{\left \|f\right \|_p}{\left \|\widehat{f}\right \|_{p'}}) }\\
&\geq e^{-2(\sum_{n\in E}\left |a_n \right |^2 \log(\left |a_n \right |)+\sum_{n\in E}\left |a_n \right |^2 \sum_{j=0}^n \frac{1}{d_n(Q_n)_{j,j}}\log(\frac{1}{d_n(Q_n)_{j,j}}))}\\
&=e^{ H((\left | a_n\right |^2)_{n\in E})+2 \sum_{n\in E}\left |a_n\right |^2 H(x_n)} \geq e^{H((\left |a_n\right |^2)_{n\in E}) },
\end{align*}
where $x_n$ is the sequence defined in the proof of Proposition \ref{prop3}.

\end{proof}

\begin{remark}
In the case of $SU_q(2)$, we can pick an explicit sequence of matrix elements $(u_{0,n})_{n\geq 0}$ showing the divergence of the uncertainty relations. Indeed, we have 
\begin{equation}
e^{\lim_{p\rightarrow 2}\frac{2p}{2-p}\log(\frac{\left \|\sqrt{q^{-n}d_n} u^n_{0,n}\right \|_{L^{p}(SU_q(2))}}{\left \|\sqrt{q^{-n}d_n}\widehat{u^n_{0,n}}\right \|_{\ell^{p'}(\widehat{SU_q(2)})}})} \sim d_n^2 ~\mathrm{for~all~}n\in \n.
\end{equation}
by Corollary \ref{cor0}, Proposition \ref{prop2} and the fact that $\displaystyle \frac{\left \|\widehat{u^n_{0,n}}\right \|_{\ell^1(\widehat{SU_q(2)})}}{\left \|u^n_{0,n}\right \|_{L^{\infty}(SU_q(2))}}=\frac{1}{q^n}$ for all $n\in \n$ [Theorem 5.4, \cite{Ko89}].
\end{remark}

\bibliographystyle{alpha}
\bibliography{Uncertainty}

\end{document}